\documentclass[11pt,a4paper]{article}
\usepackage[qm]{qcircuit}

\usepackage{mathtools}
\usepackage{authblk} 
\usepackage{algpseudocode,algorithmicx,algorithm}

\usepackage{mathrsfs}
\usepackage{latexsym,bm}
\usepackage{amsmath,amsfonts,amssymb,amsthm}
\usepackage{extarrows}

\usepackage{graphicx,subfigure,epstopdf,float}
\usepackage{adjustbox}
\usepackage{enumerate,cases,multirow}
\usepackage{makecell}
\usepackage{caption}

\usepackage{longtable,colortbl,arydshln,threeparttable}
\definecolor{mygray}{gray}{.9}

\usepackage{indentfirst}
\usepackage[top=25mm,bottom=20mm,left=25mm,right=20mm]{geometry}
\usepackage{cite}

\usepackage{listings}

\usepackage{makeidx}        
\usepackage{booktabs}
\usepackage[bookmarks,bookmarksnumbered,colorlinks,citecolor=red,linkcolor=red,hyperindex,linktocpage=true]{hyperref}

\newcommand{\ket}[1]{| #1 \rangle} 
\newcommand{\bra}[1]{\langle #1 |} 

\newcommand{\bb}{\boldsymbol}

\def \d {\mathrm{d}}
\def \e {\mathrm{e}}
\def \i {\mathrm{i}}

\newcounter{parentalgorithm}

\makeatother

\newtheorem{theorem}{Theorem}[section]
\newtheorem{lemma}{Lemma}[section]

\newtheorem{definition}{Definition}[section]

\theoremstyle{remark}
\newtheorem{remark}{\bf Remark}[section]

\numberwithin{equation}{section}

\begin{document}

\title{\bfseries Linear combination of Schr\"odingerization for quantum linear systems with optimal matrix-query complexity \thanks{Submitted to the editors \today. The authors are listed in alphabetical order. All authors contributed equally to this work and share the same co-first authorship.}}

\author[1]{Yin Yang\thanks{yangyinxtu@xtu.edu.cn}}
\author[2]{Yue Yu\thanks{terenceyuyue@xtu.edu.cn}}
\author[3]{Long Zhang\thanks{longzhang@smail.xtu.edu.cn}}

\affil[1]{Hunan Research Center of the Basic Discipline Fundamental Algorithmic Theory and Novel Computational Methods, Xiangtan University, Xiangtan 411105, Hunan, China}
\affil[2]{Key Laboratory of Intelligent Computing and Information Processing of Ministry of Education, Xiangtan University, Xiangtan 411105, Hunan, China}
\affil[3]{Hunan Key Laboratory for Computation and Simulation in Science and Engineering, Hunan International Scientific and Technological Innovation Cooperation Base of Computational Science, Xiangtan University, Xiangtan 411105, Hunan, China}

\maketitle

\begin{abstract}
Quantum linear systems algorithms (QLSAs) aim to solve linear systems $A\bb{x}=\bb{b}$ exponentially faster than classical methods under certain conditions. In this work, we develop quantum algorithms for solving linear algebraic equations from an ODE-based perspective. Inspired by the linear combination of Hamiltonian simulation (LCHS) representation in the Fourier approach \cite{Childs2017QLSA}, we express the solution $\bb{x}$ as a linear combination of solutions to a system of linear convection equations, which become Schr\"odinger-type equations with unitary evolutions in the Fourier domain. We refer to this representation as LC-Schr\"odingerization. Based on this result, we construct an LCHS-based quantum algorithm with two LCHS instances: one for time-marching and one for numerical integration. The key construction uses the derivative of a Gaussian-smoothed hat function and recovers the solution over a fixed auxiliary interval. This permits a truncation time independent of the target accuracy and avoids the loss in success probability from selecting a single grid point. Periodization and explicit Fourier coefficients provide the corresponding projection error bounds. Under the stated oracle assumptions and given a constant-factor estimate of the solution norm, direct simulation of the select operators and block preconditioning achieve the optimal matrix-query complexity $\mathcal{O}(\kappa_A\log\frac1\varepsilon)$ without using variable-time amplitude amplification (VTAA). The same upper bound holds for queries to the right-hand-side preparation oracle.
\end{abstract}

\textbf{Keywords}:
Quantum linear systems, Schr\"odingerization, LCHS, Preconditioning

\textbf{MSC2020 codes}:
68Q12; 81P68; 65Y20

\section{Introduction}

Quantum computing is an emerging computational paradigm that has attracted significant attention, primarily due to the discovery of quantum algorithms capable of offering exponential speedups over the best-known classical methods \cite{Nielsen2010, LR2010QuantumAlgebra, Deutsch1992rapid, Shor1997Prime, HHL2009}. Numerous quantum algorithms for scientific computing have been proposed in recent years. One fundamental task that underlies many areas of science and technology is the development of solvers for a linear system of equations
\begin{equation}\label{Linearsystem}
A\bb{x} = \bb{b},
\end{equation}
where $A$ is an $N\times N$ invertible matrix with $N=2^n$ and $\bb{b}\ne\bb{0}$. We write $\ket{b}=\bb{b}/\|\bb{b}\|$ and $\ket{x}=\bb{x}/\|\bb{x}\|$. Throughout, $\|\cdot\|$ denotes the Euclidean norm for vectors and the spectral norm for matrices, unless otherwise specified. Quantum linear systems algorithms (QLSAs) prepare an approximation to the normalized solution state $\ket{x}$, with the HHL algorithm \cite{HHL2009} being a prominent example.

Classical solvers for linear systems typically take time proportional to the number of unknown variables, making them computationally expensive for large systems. This is especially true for systems arising from the numerical discretization of high-dimensional partial differential equations (PDEs). The HHL algorithm, introduced by Harrow, Hassidim, and Lloyd in 2009 \cite{HHL2009}, is the first quantum algorithm proposed for solving linear systems. Under efficient sparse-matrix access and input-state preparation, its gate complexity is $\widetilde{\mathcal{O}}(\log(N) s^2\kappa^2 /\varepsilon )$, where $ \kappa $ is the condition number of the matrix, $ s $ is the sparsity of $ A $, and $ \varepsilon $ is the precision. In contrast, for positive-definite matrices, the conjugate gradient (CG) method has a time complexity of $\mathcal{O}(N s\sqrt{\kappa} \log\frac1\varepsilon)$, which exhibits polynomial growth with respect to $N$. The quantum advantage in the dimension is therefore possible when the sparsity, condition number, input preparation, and required accuracy are favorable, and the desired output is a quantum state or an efficiently measurable property rather than all entries of $\bb{x}$.
To address the limitations of phase estimation in the HHL algorithm, Ambainis \cite{Ambainis2012VTAA} introduced the variable-time amplitude amplification (VTAA) method, which improves the dependence on the condition number, reducing it from quadratic to nearly linear, up to logarithmic factors.
Building on this, Childs et al. \cite{Childs2017QLSA} further improved the algorithm, achieving a nearly linear dependence on the condition number.
To enhance the accuracy dependence, they replaced phase-estimation-based matrix inversion with Fourier or Chebyshev approximations, resulting in a $\text{polylog}(1/\varepsilon)$ dependence. This is similar to the dependence observed in classical methods and gives an exponential improvement in precision dependence compared to the HHL algorithm.
However, the VTAA procedure involves recursive amplitude amplification, which complicates its implementation \cite{Costa2022QLSA}.
To address this issue, alternative approaches based on adiabatic quantum computing (AQC) and related randomization methods have been developed in recent years \cite{Subasi2019AQC}.
An and Lin \cite{An-Lin-2022} proved that by employing an optimally tuned scheduling function, AQC is capable of efficiently solving a quantum linear system problem (QLSP) with a runtime of $ \mathcal{O}(\kappa  \text{poly}\log\frac\kappa\varepsilon) $. Lin and Tong \cite{Lin-Tong-2020} proposed a quantum eigenstate filtering algorithm and introduced Zeno eigenstate filtering when applying it to the QLSP. One can also refer to \cite{Costa2022QLSA} for a review of the literature along this line. Costa et al. prove a discrete adiabatic theorem and use it to achieve the optimal query complexity $\mathcal{O}(\kappa\log\frac1\varepsilon)$, without VTAA or a truncated Dyson-series subroutine. Earlier discrete-time adiabatic results include \cite{DKS1998Adiabatic}. For a comprehensive overview of the significant algorithmic advancements for QLSAs, we refer the reader to \cite{Lin2026QLSASurvey}.

In this article, we aim at developing quantum algorithms for solving linear algebraic equations from the perspective of ODE-based problems. This approach is straightforward when $ A $ is positive definite. In this case, we can consider the solution $\bb{x}$ as the steady-state solution to the system of linear ordinary differential equations (ODEs) $\bb{u}_t = -A \bb{u} + \bb{b}$, as described in \cite{HJZ2024multiscale}. The resulting linear ODE system is then solved by the Schr\"odingerization method introduced in \cite{JLY22SchrShort, JLY22SchrLong}. Schr\"odingerization transforms linear PDEs and ODEs with non-unitary dynamics into Schr\"odinger-type systems via the so-called warped phase transformation that maps the equation into one higher dimension.
The ODE system can also be solved by using the linear combination of Hamiltonian simulation (LCHS) method in \cite{ACL2023LCH2}, which can be viewed as the continuous implementation of the Fourier transform in the Schr\"odingerization method.

For an indefinite Hermitian matrix, the preceding steady-state construction no longer applies directly. However, the inverse matrix $A^{-1}$ can still be represented in the LCHS form from \cite{ACL2023LCH2}, but with a kernel function based on the Fourier approach in \cite{Childs2017QLSA}. While this LCHS form provides the steady-state solution for a system of linear ODEs with coefficient matrix $-A^2$ and source term $A\bb{b}$, corresponding to the least-squares equations $A^2\bb{x} = A\bb{b}$, applying Schr\"odingerization to this ODE system introduces a squared condition number.
Inspired by the similarities in the LCHS forms and recognizing that the LCHS method for linear ODEs can be recast as the Schr\"odingerization method, we express the solution $\bb{x}$ as a linear combination of the solutions of a system of linear convection equations. These equations are Schr\"odinger-type equations with unitary evolutions in the Fourier domain, and we refer to this solution representation as LC-Schr\"odingerization for short.
Based on this result, we develop an LCHS-based quantum algorithm whose Hamiltonian is given by the Schr\"odinger-type equations. Compared to the Fourier approach in \cite{Childs2017QLSA}, which is based on the direct integral discretization of the LCHS form, our linear combination can be interpreted as requiring only two instances of the LCHS: one for time-marching and the other for numerical integration. This structure separates time marching from numerical integration. The unpreconditioned matrix-query bound still exhibits a quadratic dependence on the condition number. However, by employing the block preconditioning technique proposed in \cite{Low2026quantumlinearsystem} instead of the VTAA procedure, we reduce the condition-number dependence from quadratic to linear, conditional on a constant-factor solution-norm estimate.

The key to the precision dependence is the joint choice of the kernel and the recovery procedure. Within our framework, we first consider a Gaussian kernel that reveals the connection between the abstract representation and the ODE formulation. However, its truncation time grows as the square root of the precision logarithm, and recovering the solution at a single auxiliary grid point incurs an additional loss in success probability. We therefore refine this initial kernel choice and the recovery procedure through the following ingredients.
\begin{itemize}
\item \textbf{Kernel construction.} We smooth the hat function $h(p)=(1-|p|)_+$ by convolution with a normalized Gaussian $g_\sigma$ of standard deviation $\sigma$, and set
\[
G_\sigma(p)=\frac{(h*g_\sigma)(p)}{(h*g_\sigma)(0)},\qquad
\zeta_\sigma(p)=-G_\sigma'(p).
\]
This construction keeps $\|\zeta_\sigma\|_{L^2}$ bounded independently of the smoothing width. By decreasing $\sigma$ with the target error, we can keep the truncation time proportional to an upper bound on $\|A^{-1}\|$, independent of the precision.

\item \textbf{Interval recovery.} The function $G_\sigma$ has a positive constant lower bound on $I=[-1/2,1/2]$. For the convection solution initialized by $\zeta_\sigma(p)\bb{b}$, the identity
\[
\int_0^T\bb{v}(t,p)\d t
=G_\sigma(p)\bb{x}-G_\sigma(pI_N+TA)\bb{x}
\]
shows that the same solution vector can be recovered throughout $I$. Retaining all grid points in this interval produces an approximate product state, and the grid-dependent factors in the initial-state and recovery-vector norms cancel. The auxiliary register can then be discarded.

\item \textbf{Periodization and Fourier projection.} Gaussian smoothing gives explicit Fourier coefficients with rapidly decaying tails. We periodize the kernel and use an $L^2$-orthogonal Fourier projection to control the spatial error, requiring only a logarithmically growing frequency cutoff for the chosen smoothing width.

\item \textbf{Direct simulation of the select operators.} We incorporate the time labels into block-diagonal Hamiltonians and simulate each entire select operator directly, avoiding the repeated precision cost of short-time simulations.
\end{itemize}
Together with block preconditioning, these ingredients achieve the \emph{optimal matrix-query complexity}
\[
\mathcal{O}\Big(\kappa_A\log\frac1\varepsilon\Big),
\]
under the stated oracle assumptions and given a constant-factor estimate of the solution norm. Here $\kappa_A$ is the product of the supplied normalization bounds for $A$ and $A^{-1}$. This result matches the optimal worst-case dependence on the condition-number bound and target precision for quantum linear systems \cite{Costa2022QLSA,Low2026quantumlinearsystem}, and is obtained within our LC-Schr\"odingerization framework without using VTAA. The same upper bound holds for queries to the right-hand-side preparation oracle. The optimality claim concerns matrix queries; the auxiliary-state preparation and norm-estimation assumptions are stated separately, and a complete gate count is beyond the present analysis.

Throughout this paper, $f=\mathcal{O}(g)$ and $f=\Omega(g)$ denote asymptotic upper and lower bounds, respectively, up to positive multiplicative constants independent of the varying parameters; $f=\Theta(g)$ means that both bounds hold. The notation $\widetilde{\mathcal{O}}$ additionally suppresses polylogarithmic factors in the relevant parameters, and $\operatorname{polylog}(x)$ denotes a fixed polynomial in $\log x$.

Table~\ref{tab:query_comparison} compares the matrix and right-hand-side query complexities of representative methods. Our LC-Schr\"odingerization construction attains the optimal matrix-query scaling, while retaining the two-level LCHS formulation described above.

\begin{table}[htbp]
\centering
\caption{Worst-case query complexity of representative quantum linear systems algorithms. For comparison, we use $\kappa\ge2$, $0<\varepsilon\le1/2$, and unit matrix norm with tight normalization bounds, so that $\kappa_A=\Theta(\kappa)$. Sparsity and oracle-conversion costs are held constant.}
\label{tab:query_comparison}
\small
\renewcommand{\arraystretch}{1.3}
\begin{adjustbox}{max width=\textwidth}
\begin{tabular}{@{}lll@{}}
\toprule
\multirow{2}{*}{Methods} & \multicolumn{2}{c}{Query complexity} \\
\cmidrule(lr){2-3}
& $A$ & $\bb{b}$ \\
\midrule
HHL \cite{HHL2009}
& $\widetilde{\mathcal{O}}\Big(\frac{\kappa^2}{\varepsilon}\Big)$
& $\mathcal{O}(\kappa)$ \\
LCU (Fourier, ordinary AA) \cite{Childs2017QLSA}
& $\mathcal{O}\Big(\kappa^2\operatorname{polylog}\frac\kappa\varepsilon\Big)$
& $\mathcal{O}\Big(\kappa\log^{1/2}\frac\kappa\varepsilon\Big)$ \\
LCU+VTAA \cite{Childs2017QLSA}
& $\mathcal{O}\Big(\kappa\operatorname{polylog}\frac\kappa\varepsilon\Big)$
& $\mathcal{O}\Big(\kappa\operatorname{polylog}\frac\kappa\varepsilon\Big)$ \\
QSVT (ordinary AA) \cite{Gilyen2019QSVD}
& $\mathcal{O}\Big(\kappa^2\log\frac\kappa\varepsilon\Big)$
& $\mathcal{O}(\kappa)$ \\
RM \cite{Subasi2019AQC}
& $\widetilde{\mathcal{O}}\Big(\frac\kappa\varepsilon\Big)$
& $\widetilde{\mathcal{O}}\Big(\frac\kappa\varepsilon\Big)$ \\
AQC($p$) \cite{An-Lin-2022}
& $\mathcal{O}\Big(\frac\kappa\varepsilon\log\frac\kappa\varepsilon\Big)$
& $\mathcal{O}\Big(\frac\kappa\varepsilon\log\frac\kappa\varepsilon\Big)$ \\
AQC(exp) \cite{An-Lin-2022}
& $\mathcal{O}\Big(\kappa\operatorname{polylog}\frac\kappa\varepsilon\Big)$
& $\mathcal{O}\Big(\kappa\operatorname{polylog}\frac\kappa\varepsilon\Big)$ \\
dAQC+EF \cite{Costa2022QLSA}
& $\mathcal{O}\Big(\kappa\log\frac1\varepsilon\Big)$
& $\mathcal{O}\Big(\kappa\log\frac1\varepsilon\Big)$ \\
KR \cite{dalzell2026shortcutoptimalquantumlinear}
& $\mathcal{O}\Big(\kappa\log\frac1\varepsilon\Big)$
& $\mathcal{O}\Big(\kappa\log\frac1\varepsilon\Big)$ \\
Tunable VTAA \cite{Low2026quantumlinearsystem}$^{\dagger}$
& $\mathcal{O}\Big(\kappa\log\kappa\log\frac{\log\kappa}{\varepsilon}\Big)$
& $\mathcal{O}(\kappa)$ \\
Block preconditioning \cite{Low2026quantumlinearsystem}$^{\dagger}$
& $\mathcal{O}\Big(\kappa\log\frac1\varepsilon\Big)$
& $\mathcal{O}\Big(\kappa\log\frac1\varepsilon\Big)$ \\
\textbf{This work}$^{\dagger}$
& $\mathcal{O}\Big(\kappa\log\frac1\varepsilon\Big)$
& $\mathcal{O}\Big(\kappa\log\frac1\varepsilon\Big)$ \\
\midrule
Lower bounds \cite{Costa2022QLSA,Low2026quantumlinearsystem}
& $\Omega\Big(\kappa\log\frac1\varepsilon\Big)$
& $\Omega(\kappa)$ \\
\bottomrule
\end{tabular}
\end{adjustbox}

\smallskip
\begin{minipage}{\textwidth}
\footnotesize
\textit{Notes.} Counts include inverse and controlled oracle calls, with constant success probability. The notation $\widetilde{\mathcal{O}}$ suppresses logarithmic factors in $\kappa/\varepsilon$, including simulation overheads. AA denotes amplitude amplification; RM, the randomization method; dAQC, discrete adiabatic quantum computing; EF, eigenstate filtering; and KR, kernel reflection. The Fourier LCU and QSVT rows use ordinary AA; the LCU+VTAA improvement is listed separately.

$^{\dagger}$These entries assume a constant-factor solution-norm estimate and exclude its acquisition cost. The Tunable VTAA entry is the worst-case specialization of \cite[Theorem~2]{Low2026quantumlinearsystem}. The block preconditioning entry combines the preconditioner with QSVT-based matrix inversion, as in \cite[Theorem~4, Eq.~(186)]{Low2026quantumlinearsystem}. This work also assumes the auxiliary-state preparation oracles specified in Section~\ref{subsec:complexity_Sch}; their gate costs are not included. The two lower bounds concern the respective oracles in the worst case. Our result matches the matrix-query lower bound.
\end{minipage}
\end{table}

The paper is organized as follows. In Section \ref{sec:LCHSQLSP}, we introduce an abstract framework that expresses the inverse matrix as a linear combination of Hamiltonian simulations characterized by kernel functions. We then demonstrate how the LCHS formula can be transformed into an LCHS with a Hamiltonian defined by a system of linear convection equations in one higher dimension, which corresponds to Schr\"odinger-type equations in the Fourier domain. In Section \ref{sec:implementationSchrQLSP}, we provide detailed implementation procedures, including interval recovery, along with error and query complexity analyses, where we combine our algorithm with the block preconditioning technique from \cite{Low2026quantumlinearsystem} to achieve optimal matrix-query complexity. Numerical experiments comparing the two kernels are presented in Section~\ref{sec:numerical}. Conclusions are given in the final section.

\section{Linear combination of Schr\"odingerization for quantum linear systems} \label{sec:LCHSQLSP}

This section demonstrates that the solution of the linear system can be expressed as a linear combination of the solutions of Schr\"odinger-type equations with unitary evolutions in the Fourier domain. Consequently, following \cite{JLY22SchrShort,JLY22SchrLong}, we refer to this approach as the Linear Combination of Schr\"odingerization (LC-Schr\"odingerization) for quantum linear systems.


We first introduce an abstract framework for solving the quantum linear system problem (QLSP), which represents $A^{-1}$ as a linear combination of Hamiltonian simulations $\e^{-\i H t_i}$, where $t_i \in \mathbb{R}$ and $H$ is a Hermitian matrix defined by the coefficient matrix of the linear system.

\begin{theorem} \label{thm:abstractLCHSQLSA}
Let $ A $ be an invertible Hermitian matrix, and let $\varphi(k)$ be a function defined on $\mathbb{R}$, with its Fourier transform given by $\hat{\varphi}(s) = \int_{\mathbb{R}}  \varphi(k) \e^{-\i k s} \d k$. Assume that
\[\int_0^\infty \hat{\varphi}(\lambda s) \d s = \frac{1}{\lambda}, \qquad 0\ne \lambda \in [\lambda_{\min}, \lambda_{\max}],\]
where $\lambda_{\min}$ and $\lambda_{\max}$ are the minimum and maximum eigenvalues of $A$, respectively. Then the inverse matrix $A^{-1}$ can be represented as a linear combination of Hamiltonian simulation problems:
\begin{equation}\label{abstractinvA}
A^{-1} =  \int_0^{\infty} \int_{\mathbb{R}}  \varphi(k) \e^{-\i k A s} \d k \d s.
\end{equation}
\end{theorem}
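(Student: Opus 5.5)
The plan is to reduce the matrix identity to a scalar identity for each eigenvalue, using the spectral theorem for the Hermitian matrix $A$. Write $A = \sum_{j=1}^N \lambda_j \bb{u}_j \bb{u}_j^\dagger$ with real eigenvalues $\lambda_j \in [\lambda_{\min},\lambda_{\max}]$, all nonzero since $A$ is invertible, and orthonormal eigenvectors $\bb{u}_j$. Then $\e^{-\i k A s} = \sum_j \e^{-\i k\lambda_j s}\bb{u}_j\bb{u}_j^\dagger$, and $A^{-1} = \sum_j \lambda_j^{-1}\bb{u}_j\bb{u}_j^\dagger$. Since $N$ is finite, the matrix integral in \eqref{abstractinvA} may be computed entrywise in the eigenbasis, so it suffices to treat each term separately.

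For a fixed eigenvalue $\lambda=\lambda_j$, the inner integral is
\[
\int_{\mathbb{R}} \varphi(k)\e^{-\i k\lambda s}\d k = \hat{\varphi}(\lambda s),
\]
which follows directly from the definition of the Fourier transform evaluated at the point $\lambda s$. The outer integral then becomes $\int_0^\infty \hat\varphi(\lambda s)\d s$, and this equals $1/\lambda$ by hypothesis. Summing over $j$ gives
\[
\int_0^\infty\int_{\mathbb{R}}\varphi(k)\e^{-\i kAs}\d k\d s=\sum_j \lambda_j^{-1}\bb{u}_j\bb{u}_j^\dagger = A^{-1}.
\]

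The main point needing care is the interpretation of the integrals. The order is fixed as in \eqref{abstractinvA}: the inner $k$-integral is done first, and the outer $s$-integral is understood in the same sense as in the hypothesis, possibly as an improper limit $\lim_{S\to\infty}\int_0^S$. We should assume $\varphi\in L^1(\mathbb{R})$ so that $\hat\varphi$ is well defined. With these conventions there is no interchange of integrals, so Fubini is not needed. The only other step is passing the integral through the finite sum over the spectral projectors, which is immediate by linearity.
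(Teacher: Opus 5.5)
Your proposal is correct and follows the paper's own argument: diagonalize $A$, identify the inner integral with $\hat\varphi(\lambda_j s)$ for each eigenvalue, and apply the assumed scalar identity. Your remarks on reading the outer integral as an improper limit and on needing $\varphi\in L^1(\mathbb{R})$ for $\hat\varphi$ to exist are reasonable clarifications that the paper leaves implicit.
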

\begin{proof}
The result follows by diagonalizing $A=U\Lambda U^\dag$, where $U$ is unitary and $\Lambda=\operatorname{diag}(\lambda_1,\ldots,\lambda_N)$, and applying the assumed scalar identity to each eigenvalue.
\end{proof}

Inspired by the Fourier approach in \cite{Childs2017QLSA}, we first consider the following kernel within our framework:
\begin{equation}\label{varphik}
\varphi(k) = \frac{\i}{\sqrt{2\pi}} k \e^{-k^2/2}, \qquad \hat{\varphi}(s)  = s\e^{-s^2/2},
\end{equation}
which does not impose any requirements on the signs of the eigenvalues of $A$.

It is clear that $\bb{x}$ can be interpreted as the steady-state solution to the following system of ODEs:
\begin{equation}\label{underlyingODE}
\frac{\d \bb{u}}{\d t} = - A^2\bb{u} + A \bb{b}, \quad \bb{u}(0) = \bb{0}.
\end{equation}
One can verify that
\[\|\bb{x} - \bb{u}(T_u)\| \le \varepsilon \|\bb{x}\| \quad \text{if} \quad T_u \ge \frac{\kappa(A^2)}{\|A^2\|} \log \frac{1}{\varepsilon} = \Big(\frac{\kappa}{\|A\|} \Big)^2 \log \frac{1}{\varepsilon} , \]
where the truncated evolution time depends poorly on the condition number $\kappa$ of $A$. In contrast, as shown below, the truncation time in the LCHS representation \eqref{abstractinvA} depends linearly on the condition number.

For the Gaussian choice in \eqref{varphik}, one can also write the truncated integral as
\begin{equation}\label{xT}
\bb{x}_T=\int_0^T\int_{\mathbb{R}}\varphi(k)\e^{-\i kAt}\d k\d t\,\bb{b}
=(I-\e^{-A^2T^2/2})A^{-1}\bb{b}.
\end{equation}
Thus $\|\bb{x}-\bb{x}_T\|\le\varepsilon\|\bb{x}\|$ if $T\ge\|A^{-1}\|\sqrt{2\log\frac1\varepsilon}$, and $\bb{x}_T=\bb{u}(T^2/2)$ for the solution of \eqref{underlyingODE}. This follows from
\[
\int_0^T\lambda t\e^{-(\lambda t)^2/2}\d t
=\frac{1-\e^{-(\lambda T)^2/2}}{\lambda}.
\]
This choice explains the connection with the ODE-based formulation. However, the factor $\sqrt{\log\frac1\varepsilon}$ in the truncation time also enters the normalization of the time-integral LCU. In the implementation considered below, recovering the solution at a single auxiliary grid point introduces a further loss through the norm of the discretized initial kernel. Even if a fixed recovery interval is used, the Gaussian choice with ordinary amplitude amplification gives a bound of order $\kappa_A\log^{3/2}\frac1\varepsilon$ after block preconditioning, rather than the optimal first power of the precision logarithm. This is a limitation of that implementation and its analysis, not a lower bound for all algorithms using a Gaussian kernel. We therefore retain \eqref{varphik} as motivation and use a different kernel for the subsequent theoretical error and query-complexity analysis.

\subsection{A kernel for interval recovery}

For our problem, we introduce the hat function and a normalized Gaussian,
\[
h(p)=(1-|p|)_+,\qquad
g_\sigma(p)=\frac1{\sqrt{2\pi}\sigma}\e^{-p^2/(2\sigma^2)},
\qquad 0<\sigma\le\frac14.
\]
Let
\begin{equation}\label{smoothed_hat_kernel}
a_\sigma=(h*g_\sigma)(0),\qquad
G_\sigma(p)=\frac{(h*g_\sigma)(p)}{a_\sigma},\qquad
\zeta_\sigma(p)=-G_\sigma'(p).
\end{equation}
The normalization gives $G_\sigma(0)=1$. The smoothing width $\sigma$ will be chosen according to the target accuracy, while the recovery interval
\[
I=[-1/2,1/2]
\]
is fixed. Unlike narrowing a Gaussian alone, smoothing the hat function keeps the $L^2$ norm of its derivative uniformly bounded.

\begin{lemma}\label{lem:kernel_bounds}
For $0<\sigma\le1/4$, the functions in \eqref{smoothed_hat_kernel} satisfy
\begin{equation}\label{kernel_norms}
\frac34\le a_\sigma\le1,\qquad
\frac14\le G_\sigma(p)\le\frac43\quad(p\in I),\qquad
\|\zeta_\sigma\|_{L^2(\mathbb{R})}\le2.
\end{equation}
For $|p|\ge1$,
\begin{equation}\label{kernel_spatial_tail}
|G_\sigma(p)|+|\zeta_\sigma(p)|
\le\frac C\sigma\e^{-(|p|-1)^2/(2\sigma^2)}.
\end{equation}
Moreover, with $\operatorname{sinc}(z)=\sin(z)/z$ and $\operatorname{sinc}(0)=1$, the kernel
\begin{equation}\label{varphi_sigma}
\varphi_\sigma(k)=\frac{\i k}{2\pi a_\sigma}
\operatorname{sinc}^2(k/2)\e^{-\sigma^2k^2/2}
\end{equation}
satisfies $\hat\varphi_\sigma(p)=\zeta_\sigma(p)$ under the Fourier convention of Theorem~\ref{thm:abstractLCHSQLSA}, and
\[
\int_0^\infty\hat\varphi_\sigma(\lambda t)\d t=\frac1\lambda,
\qquad \lambda\ne0.
\]
\end{lemma}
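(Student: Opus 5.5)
The plan is to treat the three groups of claims separately: the pointwise and norm bounds, the spatial tail, and the Fourier identity. All of them rely on the convolution structure $h*g_\sigma$.

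\emph{Bounds in \eqref{kernel_norms}.} Since $h\le1$ and $g_\sigma$ is a probability density, $a_\sigma=\int h(q)g_\sigma(q)\,\d q\le1$. For the lower bound, write $a_\sigma=\mathbb{E}[(1-|Z|)_+]\ge 1-\mathbb{E}|Z|=1-\sigma\sqrt{2/\pi}$ with $Z\sim\mathcal N(0,\sigma^2)$. For $\sigma\le1/4$ this is at least $1-0.2>3/4$.

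For $p\in I$, the upper bound $G_\sigma(p)\le 1/a_\sigma\le4/3$ is immediate. For the lower bound, $(h*g_\sigma)(p)=\mathbb{E}[(1-|p-Z|)_+]\ge 1-|p|-\mathbb{E}|Z|\ge 1/2-0.2\ge1/4$ when $|p|\le1/2$. Dividing by $a_\sigma\le1$ keeps this lower bound.

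For the derivative, use $\zeta_\sigma=-(h'*g_\sigma)/a_\sigma$, where $h'=-\operatorname{sign}(p)\mathbf 1_{|p|<1}$. Young's inequality gives $\|h'*g_\sigma\|_{L^2}\le\|h'\|_{L^2}\|g_\sigma\|_{L^1}=\sqrt2$. Hence $\|\zeta_\sigma\|_{L^2}\le\sqrt2\cdot4/3<2$. This is the point where smoothing the hat, rather than narrowing a Gaussian, pays off: the bound does not depend on $\sigma$.

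\emph{Tail bound \eqref{kernel_spatial_tail}.} Take $|p|\ge1$, say $p\ge1$ by symmetry. Both $h$ and $h'$ are supported in $[-1,1]$ and bounded by $1$. Therefore
\[
|h*g_\sigma(p)|+|h'*g_\sigma(p)|\le 2\int_{-1}^{1}g_\sigma(p-q)\,\d q.
\]
For $q\in[-1,1]$ we have $|p-q|\ge p-1$, so the integrand is at most $\frac{1}{\sqrt{2\pi}\sigma}\e^{-(p-1)^2/(2\sigma^2)}$. The interval has length $2$, so the total is at most $\frac{4}{\sqrt{2\pi}\,\sigma}\e^{-(p-1)^2/(2\sigma^2)}$. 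Dividing by $a_\sigma\ge3/4$ gives the claim with an absolute constant $C$.

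\emph{Fourier identity.} With the convention $\hat f(s)=\int f(k)\e^{-\i ks}\d k$, the inversion formula is $f(p)=\frac1{2\pi}\int\hat f(k)\e^{\i kp}\d k$.

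First compute the transforms of the building blocks: $\hat h(k)=\operatorname{sinc}^2(k/2)$ and $\hat g_\sigma(k)=\e^{-\sigma^2k^2/2}$. Hence
\[
G_\sigma(p)=\frac{1}{2\pi a_\sigma}\int \operatorname{sinc}^2(k/2)\e^{-\sigma^2k^2/2}\e^{\i kp}\,\d k.
\]
Differentiating under the integral, which is justified by the Gaussian factor, gives
\[
\zeta_\sigma(p)=\int\Big(-\frac{\i k}{2\pi a_\sigma}\Big)\operatorname{sinc}^2(k/2)\e^{-\sigma^2k^2/2}\e^{\i kp}\,\d k.
\]
Substituting $k\to-k$ turns this into $\int\varphi_\sigma(k)\e^{-\i kp}\,\d k=\hat\varphi_\sigma(p)$, since $\varphi_\sigma$ is odd. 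The sign bookkeeping in this step is where I would be most careful.

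Finally, for $\lambda\ne0$,
\[
\int_0^\infty\hat\varphi_\sigma(\lambda t)\,\d t=-\int_0^\infty G_\sigma'(\lambda t)\,\d t=\frac{G_\sigma(0)-G_\sigma(\infty)}{\lambda}=\frac1\lambda.
\]
This uses $G_\sigma(0)=1$ and $G_\sigma(\pm\infty)=0$, the latter from the tail bound, and holds for either sign of $\lambda$.

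The only mildly delicate step is the lower bound on $a_\sigma$ and $G_\sigma$ on $I$. The Jensen-type estimate $\mathbb{E}|Z|=\sigma\sqrt{2/\pi}$ handles it cleanly. Everything else is routine.
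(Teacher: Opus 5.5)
Your proof is correct and follows essentially the same route as the paper. You use the same four steps: the $\sigma\sqrt{2/\pi}$ perturbation bound for $a_\sigma$ and for $G_\sigma$ on $I$, Young's inequality with $\|h'\|_{L^2}=\sqrt2$, bounding the Gaussian by its maximum over $[-1,1]$ for the tail, and the transform $\operatorname{sinc}^2(k/2)\,\e^{-\sigma^2k^2/2}$ followed by $\bigl(G_\sigma(0)-G_\sigma(\lambda T)\bigr)/\lambda$ with $T\to\infty$; your sign bookkeeping in the substitution $k\to-k$ is right, and the only cosmetic difference is that you use $(1-|p-Z|)_+\ge1-|p|-|Z|$ where the paper invokes the $1$-Lipschitz property of $h$.
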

\begin{proof}
Since $h$ is $1$-Lipschitz and $0\le h\le1$, for a standard normal random variable $Z$ we have
\[
|(h*g_\sigma)(p)-h(p)|\le\sigma\mathbb{E}|Z|
=\sigma\sqrt{2/\pi}.
\]
This proves the bounds on $a_\sigma$ and $G_\sigma$ in \eqref{kernel_norms}. The weak derivative of $h$ equals $1$ on $(-1,0)$, $-1$ on $(0,1)$, and zero elsewhere. Young's inequality therefore gives
\[
\|\zeta_\sigma\|_2
\le a_\sigma^{-1}\|h'\|_2\|g_\sigma\|_1
\le\frac{4\sqrt2}{3}<2.
\]
Both $h$ and $h'$ are supported in $[-1,1]$. Bounding the Gaussian in their convolution integrals by its maximum over this support yields \eqref{kernel_spatial_tail}.

The Fourier transform $\int_{\mathbb{R}}h(p)\e^{-\i kp}\d p$ equals $\operatorname{sinc}^2(k/2)$. Convolution with $g_\sigma$ multiplies this transform by $\e^{-\sigma^2k^2/2}$, and differentiation gives \eqref{varphi_sigma}. Finally,
\[
\int_0^T\zeta_\sigma(\lambda t)\d t
=\frac{G_\sigma(0)-G_\sigma(\lambda T)}{\lambda}.
\]
Taking $T\to\infty$ proves the last assertion for both signs of $\lambda$.
\end{proof}

In the following, we show that the LCHS formula in Theorem~\ref{thm:abstractLCHSQLSA} can be converted into the LCHS with the Hamiltonian given by convection equations in one higher dimension. This transformation allows us to apply the discrete Fourier transform, similar to the approach used in the Schr\"odingerization method for ODEs.

\begin{theorem}\label{thm:discreteFourier}
Let $A$ be an invertible Hermitian matrix, and let $\bb{x}=A^{-1}\bb{b}$. Then there holds
\begin{equation}\label{integralt}
\bb{x}=\int_0^\infty\bb{v}(t,0)\d t,
\end{equation}
where $\bb{v}(t,p)$ satisfies the following system of convection equations:
\begin{equation}\label{vtp}
\begin{cases}
\partial_t\bb{v}(t,p)=A\partial_p\bb{v}(t,p),\\
\bb{v}(0,p)=\zeta_\sigma(p)\bb{b}.
\end{cases}
\end{equation}
More generally, for every $p\in\mathbb{R}$ and $T>0$,
\begin{equation}\label{interval_identity}
\int_0^T\bb{v}(t,p)\d t
=G_\sigma(p)\bb{x}-G_\sigma(pI_N+TA)\bb{x}.
\end{equation}
In particular, if $T\ge3\|A^{-1}\|$, then
\begin{equation}\label{interval_time_error}
\sup_{p\in I}\left\|\int_0^T\bb{v}(t,p)\d t-G_\sigma(p)\bb{x}\right\|
\le\frac C\sigma\e^{-9/(8\sigma^2)}\|\bb{x}\|.
\end{equation}
\end{theorem}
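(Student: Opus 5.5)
The convection system \eqref{vtp} can be solved explicitly, and the identities then follow from the fundamental theorem of calculus applied spectrally. Diagonalize $A=U\Lambda U^\dag$ and set $\bb{w}=U^\dag\bb{v}$. The system decouples into scalar equations $\partial_t w_j=\lambda_j\partial_p w_j$ with initial data $\zeta_\sigma(p)(U^\dag\bb{b})_j$. By characteristics,
\[
w_j(t,p)=\zeta_\sigma(p+\lambda_j t)(U^\dag\bb{b})_j .
\]
Equivalently, $\bb{v}(t,p)=\zeta_\sigma(pI_N+tA)\bb{b}$ in the functional-calculus sense. Uniqueness holds within, for example, the class of classical solutions, because $\zeta_\sigma$ is smooth.

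For \eqref{interval_identity}, use $\zeta_\sigma=-G_\sigma'$ and compute for each eigenvalue $\lambda\neq0$:
\[
\int_0^T\zeta_\sigma(p+\lambda t)\,\d t=\frac{G_\sigma(p)-G_\sigma(p+\lambda T)}{\lambda}.
\]
Reassembling the components gives
\[
\int_0^T\bb{v}(t,p)\,\d t=\bigl(G_\sigma(p)I_N-G_\sigma(pI_N+TA)\bigr)A^{-1}\bb{b}.
\]
Since $G_\sigma(pI_N+TA)$ commutes with $A^{-1}$, this is \eqref{interval_identity}.

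Identity \eqref{integralt} follows by setting $p=0$ (so $G_\sigma(0)=1$) and letting $T\to\infty$. We need $G_\sigma(\lambda T)\to0$ for every eigenvalue. This follows from the spatial tail bound \eqref{kernel_spatial_tail} of Lemma~\ref{lem:kernel_bounds}, since $|\lambda T|\to\infty$. Alternatively, Theorem~\ref{thm:abstractLCHSQLSA} together with the last assertion of Lemma~\ref{lem:kernel_bounds} gives the same conclusion: $\int_0^\infty\zeta_\sigma(\lambda t)\,\d t=1/\lambda$ is exactly $\hat\varphi_\sigma$ integrated.

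For \eqref{interval_time_error}, bound the remainder by
\[
\|G_\sigma(pI_N+TA)\bb{x}\|\le\max_j|G_\sigma(p+\lambda_jT)|\,\|\bb{x}\|.
\]
This is possible because $G_\sigma(pI_N+TA)$ is normal and diagonal in the same basis. For $p\in I$ and $T\ge3\|A^{-1}\|$, every eigenvalue satisfies $|\lambda_j|\ge1/\|A^{-1}\|$. Hence $|\lambda_jT|\ge3$ and
\[
|p+\lambda_jT|\ge3-\tfrac12=\tfrac52\ge1 .
\]
Applying \eqref{kernel_spatial_tail} gives
\[
|G_\sigma(p+\lambda_jT)|\le\frac{C}{\sigma}\e^{-(5/2-1)^2/(2\sigma^2)}=\frac{C}{\sigma}\e^{-9/(8\sigma^2)} .
\]
This bound is uniform in $p\in I$, so taking the supremum gives \eqref{interval_time_error}.

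The argument is elementary, and the main care is in the functional-calculus bookkeeping. First, the scalar antiderivative identity must be applied eigenvalue by eigenvalue, treating both signs of $\lambda_j$; it is valid for both, and $\lambda_j\neq0$ by invertibility. Second, the worst case over $p\in I$ must be taken, giving distance $3-\tfrac12-1=\tfrac32$ from the support edge, which is exactly what produces the exponent $9/8$.
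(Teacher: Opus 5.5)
Your proposal is correct and matches the paper's proof in substance. Both use the explicit characteristic solution $\bb{v}(t,p)=U\operatorname{diag}(\zeta_\sigma(p+\lambda_jt))U^\dagger\bb{b}$, integrate eigenvalue by eigenvalue via $\zeta_\sigma=-G_\sigma'$, and apply the tail bound \eqref{kernel_spatial_tail} with $|p+\lambda_jT|\ge5/2$. The one minor difference is the derivation of \eqref{integralt}: you get it as the $p=0$, $T\to\infty$ limit of \eqref{interval_identity}, while the paper first derives it from Theorem~\ref{thm:abstractLCHSQLSA} by defining $\bb{v}$ as the Fourier transform of $\e^{-\i kAt}\varphi_\sigma(k)\bb{b}$, which exhibits the convection equation as the Fourier-domain image of Hamiltonian simulation. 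Your ordering is slightly more self-contained, and you also address uniqueness, which the paper leaves implicit.
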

\begin{proof}
According to Theorem~\ref{thm:abstractLCHSQLSA} and Lemma~\ref{lem:kernel_bounds}, we have
\[
A^{-1}\bb{b}=\int_0^\infty\int_{\mathbb{R}}
\e^{-\i kAt}(\varphi_\sigma(k)\bb{b})\d k\d t.
\]
Assume that $\bb{v}(t,p)$ is the Fourier transform of
$\check{\bb{v}}(t,k)=\e^{-\i kAt}(\varphi_\sigma(k)\bb{b})$, namely,
\[
\bb{v}(t,p)=\int_{\mathbb{R}}\e^{-\i kp}\check{\bb{v}}(t,k)\d k.
\]
Noting that $\partial_t\check{\bb{v}}=-\i kA\check{\bb{v}}$, we apply the Fourier transform to obtain \eqref{vtp}. It is clear that the solution $\bb{x}$ can be recovered by \eqref{integralt}.

To recover the solution on an interval, write $A=U\operatorname{diag}(\lambda_j)U^\dagger$. The method of characteristics gives
\[
\bb{v}(t,p)=U\operatorname{diag}(\zeta_\sigma(p+\lambda_jt))U^\dagger\bb{b}.
\]
Integrating each component and using $\zeta_\sigma=-G_\sigma'$ proves \eqref{interval_identity}. For $p\in I$, the assumption on $T$ implies $|p+\lambda_jT|\ge5/2$. Applying \eqref{kernel_spatial_tail} gives \eqref{interval_time_error}. This completes the proof.
\end{proof}

\subsection{Periodization and Fourier projection of the new kernel}

The kernel $\zeta_\sigma$ decays rapidly, so an exactly periodic auxiliary problem can be constructed without multiplying it by a cut-off function. For $R\ge2$, set $\Omega_p:=(-R,R)$ and define
\begin{equation}\label{periodic_kernel}
\zeta_R(p)=\sum_{\ell\in\mathbb{Z}}\zeta_\sigma(p+2\ell R).
\end{equation}
Here the dependence of $\zeta_R$ on $\sigma$ is suppressed. The series and all its differentiated series converge uniformly on $\overline{\Omega}_p$, so $\zeta_R$ is smooth and $2R$-periodic.

\begin{lemma}\label{lem:periodization}
Let $A$ be invertible and Hermitian, and let $\bb{v}$ solve \eqref{vtp}. Consider
\begin{equation}\label{perExtension}
\begin{cases}
\partial_t\bb{w}(t,p)=A\partial_p\bb{w}(t,p),\quad p\in\Omega_p,\\
\bb{w}(0,p)=\zeta_R(p)\bb{b},\\
\bb{w}(t,-R)=\bb{w}(t,R).
\end{cases}
\end{equation}
If $0<\sigma\le1/4$ and
\begin{equation}\label{Rnew}
R\ge\|A\|T+2,
\end{equation}
then, with an absolute constant $C$,
\begin{equation}\label{periodization_error}
\sup_{0\le t\le T}\sup_{p\in I}\|\bb{w}(t,p)-\bb{v}(t,p)\|
\le\frac C\sigma\e^{-1/(2\sigma^2)}\|\bb{b}\|.
\end{equation}
Moreover, $\|\zeta_R\|_{L^2(\Omega_p)}\le2$.
\end{lemma}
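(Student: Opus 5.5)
Both $\bb{v}$ and $\bb{w}$ will be solved explicitly by characteristics in the eigenbasis of $A$. Write $A=U\operatorname{diag}(\lambda_j)U^\dagger$ and $\bb{c}=U^\dagger\bb{b}$. As in the proof of Theorem~\ref{thm:discreteFourier},
\[
\bb{v}(t,p)=U\operatorname{diag}(\zeta_\sigma(p+\lambda_jt))\bb{c}.
\]
The problem \eqref{perExtension} decouples into scalar periodic transport equations $\partial_t w_j=\lambda_j\partial_p w_j$ on the circle of length $2R$. The smooth $2R$-periodic initial datum $\zeta_R$ is transported to give
\[
\bb{w}(t,p)=U\operatorname{diag}(\zeta_R(p+\lambda_jt))\bb{c},
\]
where the argument is understood modulo $2R$; this is legitimate because $\zeta_R$ is defined and periodic on all of $\mathbb{R}$. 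Uniqueness for the periodic problem follows, for instance, from Fourier series in $p$. Consequently
\[
\bb{w}(t,p)-\bb{v}(t,p)=U\operatorname{diag}\Big(\sum_{\ell\ne0}\zeta_\sigma(p+\lambda_jt+2\ell R)\Big)\bb{c},
\]
and by unitarity its norm is at most $\max_j\sup_{|q|\le 1/2+\|A\|T}\big|\sum_{\ell\ne0}\zeta_\sigma(q+2\ell R)\big|\,\|\bb{b}\|$.

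To bound this tail, take $q=p+\lambda_jt$ with $p\in I$ and $0\le t\le T$, so that $|q|\le 1/2+\|A\|T\le R-3/2$ by \eqref{Rnew}. For $\ell\ne0$,
\[
|q+2\ell R|\ge 2|\ell|R-|q|\ge (2|\ell|-1)R+3/2\ge 2|\ell|-1+3/2 ,
\]
using $R\ge2$. In particular $|q+2\ell R|-1\ge 1/2+(2|\ell|-1)R\ge 1/2+2(|\ell|-1)$. Then \eqref{kernel_spatial_tail} gives
\[
|\zeta_\sigma(q+2\ell R)|\le\frac C\sigma \e^{-(1/2+2(|\ell|-1))^2/(2\sigma^2)}.
\]
Since $(1/2+m)^2\ge 1/4+m$ for $m\ge0$, summing over $\ell\ne0$ bounds the series by
\[
\frac{2C}{\sigma}\e^{-1/(8\sigma^2)}\sum_{m\ge0}\e^{-m/\sigma^2}\le\frac{C'}{\sigma}\e^{-1/(8\sigma^2)},
\]
since $\sigma\le1/4$ makes the geometric sum at most $2$. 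This crude bound yields only the exponent $1/(8\sigma^2)$. To reach the stated $\e^{-1/(2\sigma^2)}$ I would track the nearest image more carefully: the margin gives $|q+2\ell R|-1\ge 1/2+(2|\ell|-1)R$, and $R\ge2$ gives at least $5/2$ for $|\ell|=1$. This gives the exponent $(5/2)^2/(2\sigma^2)$, which is stronger than the exponent $1/(2\sigma^2)$ the lemma needs, so the stated bound holds with room to spare. Getting this elementary distance bound right is the main obstacle; everything else is routine.

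For the $L^2$ bound, note that $\zeta_R$ restricted to $\Omega_p$ is the sum of translates, and
\[
\int_{-R}^R|\zeta_R|^2=\int_{-R}^R\Big|\sum_\ell\zeta_\sigma(p+2\ell R)\Big|^2\d p .
\]
I would write $\zeta_R=\zeta_\sigma+r$ on $\Omega_p$, where $r=\sum_{\ell\ne0}\zeta_\sigma(\cdot+2\ell R)$, and use Minkowski's inequality: $\|\zeta_R\|_{L^2(\Omega_p)}\le\|\zeta_\sigma\|_{L^2(\mathbb{R})}+\|r\|_{L^2(\Omega_p)}$. Lemma~\ref{lem:kernel_bounds} gives $\|\zeta_\sigma\|_{L^2(\mathbb{R})}\le\frac{4\sqrt2}{3}\approx1.886$. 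For $r$, I would not use the pointwise image bound, since points near $\pm R$ are close to an image. Instead I would exploit the support structure: $h'$ is supported in $[-1,1]$, so $\zeta_\sigma$ is concentrated there with Gaussian tails. I would then bound $\|r\|_{L^2(\Omega_p)}^2\le\int_{|p|\ge R-1}|\zeta_\sigma|^2$ plus cross terms. All of these are controlled by \eqref{kernel_spatial_tail} with $|p|-1\ge R-2\ldots$. That estimate degenerates when $R=2$, so I would rather use the tail bound $\int_{|p|\ge 3/2}|\zeta_\sigma|^2\lesssim \sigma^{-1}\e^{-1/(4\sigma^2)}$. This is tiny for $\sigma\le1/4$ and leaves $\|r\|\le 0.1$, giving the total bound $\le2$.
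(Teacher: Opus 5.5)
The pointwise bound \eqref{periodization_error} is proved as in the paper. You use characteristics in the eigenbasis of $A$, then the margin $|p+\lambda_jt|\le R-3/2$. This gives $|p+\lambda_jt+2\ell R|-1\ge(2|\ell|-1)R+1/2\ge5/2$ for $\ell\neq0$, followed by a geometric sum over the images. The crude $\e^{-1/(8\sigma^2)}$ pass is unnecessary; your refinement is exactly the paper's argument.

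The bound $\|\zeta_R\|_{L^2(\Omega_p)}\le2$ is where you depart from the paper, and as written this step does not close. After the Minkowski split, only $2-4\sqrt2/3\approx0.114$ is left for $\|r\|_{L^2(\Omega_p)}$.

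\textbf{The unjustified threshold.} You place the relevant tail at $|y|\ge R-1$ and observe that this degenerates at $R=2$. You then switch to the tail over $|y|\ge3/2$ without any argument that $r$ is controlled by it. At $R=2$ your own region $|y|\ge1$ is strictly larger, so that bound does not imply the one you use.

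\textbf{The unverified margin.} Your stated order $\sigma^{-1}\e^{-1/(4\sigma^2)}$ equals $4\e^{-4}\approx0.07$ at $\sigma=1/4$, so the conclusion $\|r\|\le0.1$ rests on an unspecified constant. The proof of \eqref{kernel_spatial_tail} bounds the Gaussian by its maximum over $[-1,1]$, which gives about $1.06/\sigma$ for $\zeta_\sigma$. With that constant, the resulting upper bound on $\|\zeta_\sigma\|_{L^2(\{|y|\ge3/2\})}$ is about $0.19$, which does not fit in the slack.

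\textbf{The fix.} The missing observation is where the images are actually sampled. For $p\in(-R,R)$ and $\ell\neq0$ you have $|p+2\ell R|>(2|\ell|-1)R\ge R\ge2$. So every image is evaluated more than $R-1\ge1$ away from $[-1,1]$, and your worry about points near $\pm R$ is unfounded. Apply Minkowski over $\ell$ instead of expanding cross terms:
\[
\|r\|_{L^2(\Omega_p)}\le\sum_{\ell\neq0}\|\zeta_\sigma\|_{L^2(\{|y|\ge(2|\ell|-1)R\})}\le C'\e^{-1/(2\sigma^2)}\le C'\e^{-8}.
\]
This is negligible, and your route then goes through.

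The paper avoids tail estimates entirely. It writes $-a_\sigma\zeta_R$ as the periodic convolution on $\Omega_p$ of the periodized $h'$ with the periodized $g_\sigma$. Since $R>1$, the periodized $h'$ is just $h'$ on one period, with $L^2$ norm $\sqrt2$. The periodized Gaussian is a probability density on the circle. Young's inequality then gives $\|\zeta_R\|_{L^2(\Omega_p)}\le\sqrt2/a_\sigma\le4\sqrt2/3<2$ directly.

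That argument is constant-free and uniform in $R$. The repaired Minkowski route reaches the same bound, but only through the spatial tail estimate with explicit constants, because the available slack is small.
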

\begin{proof}
Write $A=U\operatorname{diag}(\lambda_j)U^\dagger$. The method of characteristics gives
\[
\bb{w}(t,p)-\bb{v}(t,p)
=U\operatorname{diag}\left(\sum_{\ell\ne0}\zeta_\sigma(p+\lambda_jt+2\ell R)\right)U^\dagger\bb{b}.
\]
For $p\in I$ and $0\le t\le T$, we have $|p+\lambda_jt|\le R-3/2$. Thus, for $\ell\ne0$,
\[
|p+\lambda_jt+2\ell R|-1\ge(2|\ell|-1)R+1/2.
\]
Using \eqref{kernel_spatial_tail}, the sum over nonzero images is bounded by
\[
\frac C\sigma\sum_{m=1}^\infty
\e^{-((2m-1)R+1/2)^2/(2\sigma^2)}
\le\frac C\sigma\e^{-1/(2\sigma^2)}.
\]
The remaining Gaussian series is uniformly summable for $R\ge2$ and $\sigma\le1/4$. Taking the operator norm proves \eqref{periodization_error}.

For the last assertion, periodize $h'$ and $g_\sigma$ on $\Omega_p$. Their periodic convolution equals $-a_\sigma\zeta_R$. The periodized Gaussian is nonnegative with integral one, while the periodized $h'$ has $L^2$ norm $\sqrt2$. Young's inequality on the periodic domain therefore gives $\|\zeta_R\|_2\le\sqrt2/a_\sigma<2$.
\end{proof}

For the projection, we have the following explicit coefficients and approximation error in the maximum norm.
\begin{lemma}\label{lem:gaussian_fourier}
Let $\mu_k=\pi k/R$ for $k\in\mathbb{Z}$. The Fourier coefficients of $\zeta_R$ in the basis $\e^{\i\mu_kp}$ are
\begin{equation}\label{gaussian_coefficients}
c_k=\frac1{2R}\int_{-R}^R\zeta_R(p)\e^{-\i\mu_kp}\d p
=-\frac{\i\mu_k}{2Ra_\sigma}\operatorname{sinc}^2(\mu_k/2)
\e^{-\sigma^2\mu_k^2/2}.
\end{equation}
Let $N_p=2^{n_p}\ge4$ and $\mathcal{I}=\{-N_p/2,\ldots,N_p/2-1\}$. Denote by $\Pi_p$ the $L^2(\Omega_p)$-orthogonal projection onto $\operatorname{span}\{\e^{\i\mu_kp}:k\in\mathcal{I}\}$. Then
\[
\Pi_p\zeta_R(p)=\sum_{k\in\mathcal{I}}c_k\e^{\i\mu_kp},\qquad
K=\frac\pi R\Big(\frac{N_p}{2}-1\Big).
\]
If $K\ge1$, then
\begin{equation}\label{fourier_tail}
\|\zeta_R-\Pi_p\zeta_R\|_{L^\infty(\Omega_p)}
\le\sum_{k\notin\mathcal{I}}|c_k|
\le C\frac{\e^{-\sigma^2K^2/2}}{\sigma^2K^2}.
\end{equation}
\end{lemma}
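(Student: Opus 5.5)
The plan has three parts: compute the coefficients by Poisson summation, identify $\Pi_p\zeta_R$ with the truncated Fourier series, and bound the discarded coefficients by comparing the sum with an integral.

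\textbf{Coefficients.} Since the series \eqref{periodic_kernel} converges uniformly on $\overline{\Omega}_p$, we may integrate it term by term. Each term $\zeta_\sigma(p+2\ell R)$ on $(-R,R)$ becomes, after the substitution $q=p+2\ell R$, the restriction of $\zeta_\sigma$ to the translated cell $(-R+2\ell R,R+2\ell R)$. On that cell $\e^{-\i\mu_kp}=\e^{-\i\mu_kq}$, because $\mu_k 2\ell R=2\pi k\ell$. The cells tile $\mathbb{R}$, so
\[
c_k=\frac1{2R}\int_{\mathbb{R}}\zeta_\sigma(q)\e^{-\i\mu_kq}\d q .
\]
We then need the Fourier transform of $\zeta_\sigma=-G_\sigma'$. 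The transform of $h$ is $\operatorname{sinc}^2(k/2)$, and convolution with $g_\sigma$ contributes the factor $\e^{-\sigma^2k^2/2}$. Dividing by $a_\sigma$ and differentiating multiplies by $-\i k$ under this sign convention. Evaluating at $k=\mu_k$ gives \eqref{gaussian_coefficients}; the same computation appears in the proof of Lemma~\ref{lem:kernel_bounds}.

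\textbf{Projection.} The functions $\e^{\i\mu_kp}$ form an orthogonal basis of $L^2(\Omega_p)$ with $\|\e^{\i\mu_kp}\|^2=2R$. Hence the orthogonal projection onto their span over $\mathcal I$ is the truncated Fourier series $\sum_{k\in\mathcal I}c_k\e^{\i\mu_kp}$. The coefficients are absolutely summable (Gaussian decay), and $\zeta_R$ is smooth and periodic. So the full series converges uniformly to $\zeta_R$, and
\[
\|\zeta_R-\Pi_p\zeta_R\|_{L^\infty}\le\sum_{k\notin\mathcal I}|c_k| .
\]

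\textbf{Tail bound.} Every $k\notin\mathcal I$ satisfies $|k|\ge N_p/2$, so $|\mu_k|\ge \pi N_p/(2R)>K$. Using $|\operatorname{sinc}^2(\mu/2)|\le 4/\mu^2$ and $a_\sigma\ge3/4$, we get
\[
|c_k|\le \frac{C}{R}\,\frac{\e^{-\sigma^2\mu_k^2/2}}{|\mu_k|} .
\]
The points $\mu_k$ are spaced $\pi/R$ apart, so the factor $1/R$ is exactly the Riemann-sum weight. Since the summand is decreasing for $\mu>0$, we bound the sum over $k\ge N_p/2$ by the sum over $k\ge N_p/2-1$, i.e.\ over $\mu\ge K$. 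This sum is at most the value at $\mu=K$ plus the integral. We use $K\ge1$ to control the first term by the integral-type bound; I expect this endpoint term to be the delicate point, and would absorb it using $K\ge1$ or the $\sigma^2K^2$ factor, adjusting constants if necessary. The integral tail is
\[
\int_K^\infty \frac{\e^{-\sigma^2\mu^2/2}}{\mu}\d\mu\le \frac1{K^2}\int_K^\infty \mu\,\e^{-\sigma^2\mu^2/2}\d\mu=\frac{\e^{-\sigma^2K^2/2}}{\sigma^2K^2}.
\]
Treating negative $k$ symmetrically yields \eqref{fourier_tail}.
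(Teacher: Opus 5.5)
Your coefficient computation (unfolding the periodization, then using $\int_{\mathbb{R}}(h*g_\sigma)(p)\e^{-\i kp}\d p=\operatorname{sinc}^2(k/2)\e^{-\sigma^2k^2/2}$) matches the paper, as does your identification of $\Pi_p\zeta_R$ with the truncated Fourier series. The bound $|c_k|\le\frac CR\,\e^{-\sigma^2\mu_k^2/2}/|\mu_k|$ is also correct. The gap is the step you flagged yourself, and your proposed absorption does not work. You enlarge the sum to $k\ge N_p/2-1$ and use ``value at $\mu=K$ plus integral''. This leaves the endpoint term $\frac{C}{RK}\e^{-\sigma^2K^2/2}$. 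Its ratio to the target $\e^{-\sigma^2K^2/2}/(\sigma^2K^2)$ is of order $\sigma^2K/R$, which is unbounded under the lemma's hypotheses. For example, take $\sigma=1/4$ and $R=2$, and let $N_p\to\infty$, so that $K\to\infty$. The hypothesis $K\ge1$ does not help. Nor does the factor $\sigma^2K^2$: it works against you exactly when $\sigma K$ is large. So this route gives no absolute constant $C$.

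The fix is to avoid creating an endpoint term. Let $f(u)=u^{-1}\e^{-\sigma^2u^2/2}$ and $d=\pi/R$. Since $f$ is decreasing on $(0,\infty)$, each term can be compared with the integral over the interval to its left: $f(kd)\le\frac1d\int_{(k-1)d}^{kd}f(u)\,\d u$ for every $k\ge N_p/2$. Summing over $k\ge N_p/2$ telescopes to $\frac1d\int_K^\infty f(u)\,\d u$, because $(N_p/2-1)d=K$. This is precisely why $K$ is defined one grid step below the first omitted frequency; it is what the paper does. The prefactor $\frac CR\cdot\frac1d=\frac C\pi$ is then absolute. Your integral estimate $\int_K^\infty f\le\e^{-\sigma^2K^2/2}/(\sigma^2K^2)$ finishes the proof, and the negative modes, which have $|k|\ge N_p/2+1$, are handled symmetrically.

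Alternatively, you could keep your scheme but start the sum at $k=N_p/2$ rather than $N_p/2-1$. The endpoint term $\frac CR f(K+d)$ then carries an extra factor $\e^{-\sigma^2Kd}=\e^{-\pi\sigma^2K/R}$. This factor does absorb $\sigma^2K/R$, since $x\e^{-\pi x}$ is bounded. The left-interval comparison is cleaner, and it shows that $K\ge1$ is not actually needed beyond $K>0$.
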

\begin{proof}
Unfolding the integral in \eqref{gaussian_coefficients} and applying the Fourier transform of $h*g_\sigma$ gives the stated coefficients. Since
$|u\operatorname{sinc}^2(u/2)|\le4/|u|$, set $d=\pi/R$ and use the monotonicity of $u^{-1}\e^{-\sigma^2u^2/2}$ on $(0,\infty)$ to obtain
\[
\sum_{k\notin\mathcal{I}}|c_k|
\le\frac C R\sum_{k=N_p/2}^\infty\frac{\e^{-\sigma^2(kd)^2/2}}{kd}
\le C\int_K^\infty\frac{\e^{-\sigma^2u^2/2}}u\d u
\le C\frac{\e^{-\sigma^2K^2/2}}{\sigma^2K^2}.
\]
The projection estimate follows by taking absolute values of the omitted Fourier modes.
\end{proof}

\section{Implementation of LC-Schr\"odingerization}\label{sec:implementationSchrQLSP}

We focus solely on Hermitian coefficient matrices. For a non-Hermitian matrix $A$, we introduce the dilation matrix
\[
\widetilde A=\ket{0}\bra{1}\otimes A+\ket{1}\bra{0}\otimes A^\dagger,
\qquad
\widetilde A\begin{bmatrix}\bb{0}\\\bb{x}\end{bmatrix}
=\begin{bmatrix}\bb{b}\\\bb{0}\end{bmatrix}.
\]
The matrix $\widetilde A$ is Hermitian, and its singular values are those of $A$, each repeated twice. Thus the dilation preserves the condition number and the solution norm for the embedded right-hand side.

\subsection{Discretization of the auxiliary variable}

Let $p\in\overline{\Omega}_p=[-R,R]$ and consider the periodic problem \eqref{perExtension}. Then one can apply the Fourier spectral method by discretizing the $p$ domain. Toward this end, we choose a uniform mesh size $\Delta p=2R/N_p$ for the auxiliary variable, with $N_p=2^{n_p}\ge4$. The grid points are denoted by $p_j=-R+j\Delta p$, $j=0,1,\ldots,N_p-1$, with the endpoints identified by periodicity. The grid contains zero at $j=N_p/2$.
Order the modes in $\mathcal{I}$ increasingly and define the unitary Fourier matrix and frequency matrix by
\begin{equation}\label{unitary_fourier}
F_{jk}=\frac1{\sqrt{N_p}}\e^{\i\mu_kp_j},\qquad
D_\mu=\operatorname{diag}(\mu_k)_{k\in\mathcal{I}}.
\end{equation}
For later use, we let
\begin{equation}\label{mu_max_definition}
\mu_{\max}:=\max_{k\in\mathcal{I}}|\mu_k|
=\|D_\mu\|=\frac{\pi N_p}{2R}=\frac{\pi}{\Delta p}.
\end{equation}
The matrix $F$ is implemented by a QFT together with the mode reordering and phase factors associated with the centered grid. All Fourier transformations below use this unitary normalization.

Let $\psi(p)=\zeta_R(p)$ and denote the grid values of its $L^2$-orthogonal Fourier projection by $\bb{\psi}_{\Pi}$. Thus
\begin{equation}\label{projected_initial_state}
\bb{\psi}_{\Pi}
=\bigl((\Pi_p\psi)(p_j)\bigr)_{j=0}^{N_p-1},
\qquad
F^\dagger\bb{\psi}_{\Pi}=(\sqrt{N_p}c_k)_{k\in\mathcal{I}}.
\end{equation}
The operator $\Pi_p$ acts on the periodic function before sampling. Hence $\bb{\psi}_{\Pi}$ contains samples of the projected function, not samples of the unprojected kernel; this distinction avoids an additional aliasing error. By unitarity of $F$ and Parseval's identity,
\begin{equation}\label{norm_psi}
\|\bb{\psi}_{\Pi}\|^2
=\|F^\dagger\bb{\psi}_{\Pi}\|^2
=N_p\sum_{k\in\mathcal{I}}|c_k|^2
\le\frac{N_p}{2R}\|\zeta_R\|_{L^2(\Omega_p)}^2
\le\frac4{\Delta p}.
\end{equation}

Let $\bb{w}(t,p)=[w_1(t,p),\ldots,w_N(t,p)]^\top$ be the solution to \eqref{perExtension}. Its Fourier spectral discretization is given by
\begin{equation}\label{interpcoeff}
\bb{w}_h(t,p)=\sum_{k\in\mathcal{I}}c_k\e^{\i\mu_kp}\e^{\i\mu_kAt}\bb{b}.
\end{equation}
Let the vector $\bb{W}_h$ be the collection of the function $\bb{w}_h$ at the grid points, defined more precisely as
\[
\bb{W}_h(t)=[\bb{w}_h(t,p_0);\ldots;\bb{w}_h(t,p_{N_p-1})],
\]
with ``;'' indicating stacking into a column vector. The equation in \eqref{perExtension} is then transformed into
\begin{equation}\label{wh}
\partial_t\bb{W}_h=\i(FD_\mu F^\dagger\otimes A)\bb{W}_h,
\qquad \bb{W}_h(0)=\bb{\psi}_{\Pi}\otimes\bb{b}.
\end{equation}
In terms of $\widetilde{\bb{W}}_h=(F^\dagger\otimes I)\bb{W}_h$, one gets the following Hamiltonian system:
\begin{equation}\label{discreteLCHSQLSP}
\partial_t\widetilde{\bb{W}}_h=\i(D_\mu\otimes A)\widetilde{\bb{W}}_h,
\qquad \widetilde{\bb{W}}_h(0)=F^\dagger\bb{\psi}_{\Pi}\otimes\bb{b}.
\end{equation}

\subsection{Truncation and projection errors}\label{sec:Truncation of the integral}

The solution to \eqref{discreteLCHSQLSP} can be written as
\[
\widetilde{\bb{W}}_h(t)=\e^{\i(D_\mu\otimes A)t}\widetilde{\bb{W}}_h(0),
\]
which gives
\[
\bb{W}_h(t)=(F\otimes I)\e^{\i(D_\mu\otimes A)t}\widetilde{\bb{W}}_h(0).
\]
According to Theorem~\ref{thm:discreteFourier}, we can recover the approximate solution to the linear systems problem by retaining the grid points in $I=[-1/2,1/2]$ and truncating the integral. Define
\begin{equation}\label{interval_projector}
\Pi_I=\sum_{j:p_j\in I}\ket{j}\bra{j}\otimes I_N,\qquad
\bb{G}_I=(\mathbf{1}_I(p_j)G_\sigma(p_j))_{j=0}^{N_p-1}.
\end{equation}
Then the target vector on these registers is $\bb{G}_I\otimes\bb{x}$, and we approximate it by
\begin{equation}\label{xhT}
\bb{z}_{h,T}=\Pi_I\int_0^T\bb{W}_h(t)\d t
=\Pi_I(F\otimes I)\int_0^T\e^{\i(D_\mu\otimes A)t}\d t\,
(F^\dagger\bb{\psi}_\Pi\otimes\bb{b}).
\end{equation}
The distinction between $\Pi_p$ and $\Pi_I$ is that the former is a Fourier projection on functions, whereas the latter selects a set of auxiliary grid indices.

\begin{theorem}\label{thm:pdiscretization}
Let $0<\varepsilon\le1/2$ and $\xi=\|A^{-1}\ket{b}\|$. Suppose $\alpha_A\ge\|A\|$ and $\alpha_{A^{-1}}\ge\|A^{-1}\|$. Choose
\begin{equation}\label{kernel_parameters}
T=3\alpha_{A^{-1}},\qquad
\sigma=\frac1{2\sqrt{\log\frac{C_0T}{\varepsilon\xi}}},\qquad
R=\alpha_AT+2,
\end{equation}
where $C_0$ is a sufficiently large absolute constant. Choose the smallest power of two $N_p\ge4$ for which
\[
K\ge4\log\frac{C_0T}{\varepsilon\xi}.
\]
Then
\begin{equation}\label{error2}
\|\bb{z}_{h,T}-\bb{G}_I\otimes\bb{x}\|
\le\frac\varepsilon4\|\bb{G}_I\|\|\bb{x}\|.
\end{equation}
These choices satisfy
\begin{equation}\label{projection_parameters}
\begin{aligned}
\mu_{\max}&=\mathcal{O}\Big(\log\frac{T}{\varepsilon\xi}\Big),\qquad
N_p=\mathcal{O}\Big(R\log\frac{T}{\varepsilon\xi}\Big),\\
\|\bb{\psi}_\Pi\|&=\mathcal{O}\Big(\log^{1/2}\frac{T}{\varepsilon\xi}\Big),\qquad
\frac{\|\bb{\psi}_\Pi\|}{\|\bb{G}_I\|}\le12.
\end{aligned}
\end{equation}
\end{theorem}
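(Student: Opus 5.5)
The approach is to split the error $\bb{z}_{h,T}-\bb{G}_I\otimes\bb{x}$ into three pieces: a Fourier projection error, a periodization error and a time-truncation error. Each piece is bounded pointwise on the retained grid points $p_j\in I$, and the pointwise bounds are then summed in $\ell^2$ over those points.

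The first observation is that $\bb{w}_h$ in \eqref{interpcoeff} is the exact solution of the periodic problem \eqref{perExtension} with initial datum $(\Pi_p\zeta_R)\bb{b}$. Characteristics give $\bb{w}_h(t,p)=U\operatorname{diag}((\Pi_p\zeta_R)(p+\lambda_jt))U^\dagger\bb{b}$, with the argument taken periodically. Hence for every $t$ and $p$,
\[
\|\bb{w}_h(t,p)-\bb{w}(t,p)\|\le\|\zeta_R-\Pi_p\zeta_R\|_{L^\infty}\|\bb{b}\|.
\]
The $j$-th block of $\bb{z}_{h,T}$ is $\int_0^T\bb{w}_h(t,p_j)\d t$ for $p_j\in I$. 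I would therefore write, for $p_j\in I$,
\[
\int_0^T\bb{w}_h\d t-G_\sigma(p_j)\bb{x}
=\int_0^T(\bb{w}_h-\bb{w})\d t+\int_0^T(\bb{w}-\bb{v})\d t+\Big(\int_0^T\bb{v}\d t-G_\sigma(p_j)\bb{x}\Big).
\]
These three terms are bounded as follows.
\begin{itemize}
\item The first term is at most $T\,C\e^{-\sigma^2K^2/2}/(\sigma^2K^2)\,\|\bb{b}\|$ by Lemma~\ref{lem:gaussian_fourier}.
\item The second term is at most $T\,C\sigma^{-1}\e^{-1/(2\sigma^2)}\|\bb{b}\|$ by Lemma~\ref{lem:periodization}. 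Its hypothesis \eqref{Rnew} holds because $R=\alpha_AT+2\ge\|A\|T+2$.
\item The third term is at most $C\sigma^{-1}\e^{-9/(8\sigma^2)}\|\bb{x}\|$ by \eqref{interval_time_error}. This applies because $T=3\alpha_{A^{-1}}\ge3\|A^{-1}\|$.
\end{itemize}
To express everything relative to $\|\bb{x}\|$, I use $\|\bb{b}\|=\|\bb{x}\|/\xi$.

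Next comes the parameter bookkeeping. Set $L=\log\frac{C_0T}{\varepsilon\xi}$, so that $\sigma^2=1/(4L)$. Then $\e^{-1/(2\sigma^2)}=\e^{-2L}=(\varepsilon\xi/(C_0T))^2$ and $\sigma^{-1}=2\sqrt L$. This makes the second term at most
\[
C\,T\sqrt L\,\Big(\frac{\varepsilon\xi}{C_0T}\Big)^2\frac{\|\bb{x}\|}{\xi}\le\frac{C\varepsilon}{C_0}\|\bb{x}\|,
\]
where I use $\sqrt L\,\varepsilon\xi/(C_0T)\le1$. This last inequality needs $L\ge1$, for example, together with $\sqrt L\,\e^{-L}\le1$.

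For the first term, $K\ge4L$ gives $\sigma^2K^2/2\ge2L$ and $\sigma^2K^2\ge4L\ge1$, so this term is also at most $C\varepsilon\|\bb{x}\|/C_0$. The third term is smaller still, since $\e^{-9L/2}\sqrt L\le\e^{-2L}$. Because $\xi\le\|A^{-1}\|\le T/3$, we get $C_0T/(\varepsilon\xi)\ge6C_0$, so $L\ge1$ and $\sigma\le1/4$ hold, as the lemmas require. Summing over the $M=\#\{j:p_j\in I\}$ retained points gives a total error of at most $\sqrt M\,C\varepsilon\|\bb{x}\|/C_0$.

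The comparison with $\|\bb{G}_I\|$ uses Lemma~\ref{lem:kernel_bounds}: since $G_\sigma\ge1/4$ on $I$, we have $\|\bb{G}_I\|\ge\sqrt M/4$. Taking $C_0$ large then gives \eqref{error2}.

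The parameter claims \eqref{projection_parameters} come from the following observations.
\begin{itemize}
\item $N_p$ is the smallest power of two satisfying the condition on $K=\frac\pi R(N_p/2-1)$. Hence $\mu_{\max}=\pi N_p/(2R)\le 2(K_{\min}+\pi/R)+\mathcal{O}(1)=\mathcal{O}(L)$, and also $N_p=\mathcal{O}(RL)$.
\item By \eqref{norm_psi}, $\|\bb{\psi}_\Pi\|^2\le4/\Delta p=2N_p/R=\mathcal{O}(L)$.
\item For the ratio, $M\ge\lfloor1/\Delta p\rfloor$, because $I$ has length $1$ and contains the grid point $0$. So $\|\bb{G}_I\|^2\ge M/16$, while $\|\bb{\psi}_\Pi\|^2\le4/\Delta p$.
\end{itemize}

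I expect the main obstacle to be this last ratio constant $12$. It needs a careful lower bound on $M$ relative to $1/\Delta p$. Since $\Delta p=2R/N_p\le$ about $\pi/(2L)$ is small, $M\ge1/\Delta p-1\ge$ roughly $1/(2\Delta p)$ once $\Delta p\le1/2$. This gives
\[
\frac{\|\bb{\psi}_\Pi\|^2}{\|\bb{G}_I\|^2}\le\frac{4/\Delta p}{1/(32\Delta p)}=128\le144,
\]
which is exactly the square of $12$. I would check that $\Delta p\le1/2$ follows from $K\ge4L\ge4$. The crude bound $M\ge1/\Delta p-1$ might need refinement by counting the points in $[-1/2,1/2]$ symmetric about $0$, which is where the constant is tight.
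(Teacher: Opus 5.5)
Your proposal is correct and follows essentially the same route as the paper. It uses the same three-way split into Fourier-projection, periodization and time-truncation errors bounded uniformly on $I$; your characteristic argument for the first term is equivalent to the paper's use of the unitarity of each $\e^{\i\mu_kAt}$ together with $\sum_{k\notin\mathcal I}|c_k|$. The bookkeeping via $\e^{-L}=\varepsilon\xi/(C_0T)$ and the lower bound of $1/(2\Delta p)$ retained grid points, which gives the ratio $8\sqrt2<12$, are also the paper's.

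One small fix: $\Delta p\le1/2$ does not follow from $K\ge4$ alone, which only gives $\Delta p<\pi/4$. It does follow from $K\ge4L\ge16$: the requirement $\sigma\le1/4$ already forces $L\ge4$, so $\Delta p=\pi/\mu_{\max}<\pi/K\le\pi/16$.
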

\begin{proof}
For brevity, write $M=\log\frac{C_0T}{\varepsilon\xi}$. Since $\xi\le\alpha_{A^{-1}}$ and $\varepsilon\le1/2$, choosing $C_0\ge16$ ensures $M\ge4$, $\sigma\le1/4$, and $R\ge5$. Theorem~\ref{thm:discreteFourier} bounds the time truncation error uniformly on $I$ by $C\sqrt M\e^{-9M/2}\|\bb{x}\|$. Lemma~\ref{lem:periodization} bounds the time integral of the periodization error by $CT\sqrt M\e^{-2M}\|\bb{b}\|$. Finally, each $\e^{\i\mu_kAt}$ is unitary, so Lemma~\ref{lem:gaussian_fourier} gives
\begin{equation}\label{errorpk}
\sup_{0\le t\le T}\sup_p\|\bb{w}(t,p)-\bb{w}_h(t,p)\|
\le C\frac{\e^{-\sigma^2K^2/2}}{\sigma^2K^2}\|\bb{b}\|
\le C\frac{\e^{-2M}}M\|\bb{b}\|.
\end{equation}
Consequently,
\[
\sup_{p\in I}\left\|\int_0^T\bb{w}_h(t,p)\d t-G_\sigma(p)\bb{x}\right\|
\le C\sqrt M\e^{-9M/2}\|\bb{x}\|
+CT\sqrt M\e^{-2M}\|\bb{b}\|.
\]
Since $(T/\xi)\e^{-M}=\varepsilon/C_0$ and $\sqrt M\e^{-M}$ is bounded, increasing the absolute constant $C_0$ makes this at most $\varepsilon\|\bb{x}\|/16$. Let $n_I$ denote the number of grid points in $I$. By \eqref{kernel_norms}, $\|\bb{G}_I\|\ge\sqrt{n_I}/4$. Summing the squared pointwise errors over these grid points proves \eqref{error2}.

To make the frequency bound explicit, recall that
\[
K=\frac\pi R\Big(\frac{N_p}{2}-1\Big)=\mu_{\max}-\frac\pi R.
\]
Thus $K\ge4M$ is equivalent to $N_p\ge2+8RM/\pi$. This threshold is greater than four, so the smallest admissible power of two satisfies
\[
N_p=2^{\left\lceil\log_2(2+8RM/\pi)\right\rceil},\qquad
2+\frac{8RM}\pi\le N_p<2\Big(2+\frac{8RM}\pi\Big).
\]
Multiplying by $\pi/(2R)$ gives
\begin{equation}\label{mu_max_scaling}
4M+\frac\pi R\le\mu_{\max}<8M+\frac{2\pi}R=\mathcal{O}(M).
\end{equation}
In particular, $\Delta p=\pi/\mu_{\max}\le\pi/16<1/2$ and $N_p=\mathcal{O}(RM)$. Since the grid contains zero, its number of points in $I$ satisfies
\[
\frac1{2\Delta p}\le n_I\le\frac2{\Delta p}.
\]
Lemma~\ref{lem:periodization}, \eqref{norm_psi}, and \eqref{kernel_norms} now give
\[
\|\bb{\psi}_\Pi\|\le\frac2{\sqrt{\Delta p}},\qquad
\|\bb{G}_I\|\ge\frac1{4\sqrt{2\Delta p}},\qquad
\frac{\|\bb{\psi}_\Pi\|}{\|\bb{G}_I\|}\le8\sqrt2<12.
\]
Since $M=\mathcal{O}(\log\frac{T}{\varepsilon\xi})$, this completes the parameter estimates in \eqref{projection_parameters}.
\end{proof}

\begin{remark}\label{rem:interval_gain}
The projected initial state can be prepared directly in the frequency register, with amplitudes proportional to
\[
-\i\mu_k\operatorname{sinc}^2(\mu_k/2)\e^{-\sigma^2\mu_k^2/2}.
\]
We count this preparation as an auxiliary oracle and do not bound its gate cost here. If raw samples of $\zeta_\sigma$ are used instead, their difference from \eqref{projected_initial_state} must be included in the error budget.

For interval recovery, the grid factors in $\|\bb{\psi}_\Pi\|$ and $\|\bb{G}_I\|$ cancel. This cancellation is lost if only one grid point is retained. For comparison, with the Gaussian derivative kernel associated with our initial choice \eqref{varphik}, one still has this cancellation on a fixed interval, but its truncation time is $\Theta(\|A^{-1}\|\sqrt{\log\frac1\varepsilon})$ and its retained frequency is of order $\sqrt{\log\frac{T}{\varepsilon\xi}}$. Combining the amplitude-amplification and direct-simulation costs as below then gives the $\log^{3/2}\frac1\varepsilon$ bound mentioned after \eqref{xT}. The new kernel also removes the precision dependence from $T$.
\end{remark}

\subsection{Discretization of the truncated integral}\label{sec:Discretization of the truncated integral}

Let $U(A,t)=\e^{\i (D_\mu \otimes A)t}$. We have
\begin{equation}\label{xTtruncate}
\bb{G}_I\otimes\bb{x}\approx\bb{z}_{h,T}=\Pi_I(F\otimes I_N)g(A)\tilde{\bb{W}}_h(0),\qquad g(A)=\int_0^T U(A,t)\d t
\end{equation}
with the parameter choices in Theorem~\ref{thm:pdiscretization}, where $\Pi_I$ is the interval projector in \eqref{interval_projector}.

To present an explicit algorithm, we need to express the truncated integral as a finite sum using numerical integration.
Let $t_m = m \tau$ for $m=0, 1,\cdots,N_t$, where $N_t$ is a power of two and $\tau=T/N_t$ satisfies $\tau\|D_\mu\otimes A\|\le1$. We use composite Gaussian quadrature to discretize the variable $t$ and obtain
\begin{equation}\label{ShrodingerhA}
g(A) = \sum_{m=0}^{N_t-1} \int_{m\tau}^{(m+1)\tau} U(A,t) \d t
\approx \sum_{m=0}^{N_t-1} \sum_{q=0}^{Q-1} w_{q}\tau \e^{\i (D_\mu \otimes A)t_{m,q}} =: p(A),
\end{equation}
where, on each interval $[m \tau, (m + 1)\tau]$, we use Gaussian quadrature with $Q$ nodes. Here $t_{m,q} = t_m + \xi_q \tau$, where the $\xi_q$'s are the Gaussian nodes on $[0,1]$, and the $w_q$'s are the Gaussian weights.

In the implementation, $\e^{\i (D_\mu \otimes A)t_{m,q}}$ will be approximated by some Hamiltonian simulation algorithm as described later. The associated approximation of $p(A)$ will be denoted by $\tilde{p}(A)$ and the approximation for \eqref{xTtruncate} is given by
\begin{equation}\label{xTtruncateGauss}
\bb{z}_{h,T}^{\d} = \Pi_I (F \otimes I_N) \tilde{p}(A) (F^\dagger \otimes I_N) (\bb{\psi}_{\Pi} \otimes \bb{b}).
\end{equation}

To measure the accuracy of the approximation, we first introduce the quadrature error described as follows.

\begin{lemma}\label{lem:ShrodingerGaussErr}
Let $A$ be a Hermitian matrix. For each eigenvalue $\lambda$ of $A$, suppose that $f(\lambda,\cdot)\in C^{2Q}[a,b]$, and define $f(A,x)$ by spectral calculus. Let $w_q$ and $x_q$ be the Gaussian quadrature weights and points on $[a,b]$ for $q=0,1,\cdots,Q-1$. Then there holds
\[\Big\|\int_a^b f(A,x) \d x - \sum_{q=0}^{Q-1} w_q f(A,x_q) \Big\| \le  \frac{(b-a)^{2Q+1}(Q!)^4 }{(2Q+1)[ (2Q)! ]^3} \max_{x\in [a,b]}\|f^{(2Q)} (A,x)\|,\]
where $f^{(2Q)}$ denotes the $2Q$th-order partial derivative with respect to $x$.
\end{lemma}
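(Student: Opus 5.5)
The plan is to reduce the matrix statement to the classical scalar error formula for Gaussian quadrature by diagonalizing $A$. Write $A=U\Lambda U^\dagger$ with $U$ unitary and $\Lambda=\operatorname{diag}(\lambda_1,\ldots,\lambda_N)$. By the definition of $f(A,x)$ through spectral calculus,
\[
f(A,x)=U\operatorname{diag}\bigl(f(\lambda_1,x),\ldots,f(\lambda_N,x)\bigr)U^\dagger .
\]
Integration in $x$, finite quadrature sums, and differentiation in $x$ all act entrywise on the diagonal factor. In particular, $f^{(2Q)}(A,x)=U\operatorname{diag}(\partial_x^{2Q}f(\lambda_j,x))U^\dagger$, where we use that each $f(\lambda_j,\cdot)\in C^{2Q}[a,b]$. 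Hence the error matrix
\[
E=\int_a^b f(A,x)\,\d x-\sum_{q=0}^{Q-1}w_qf(A,x_q)
\]
equals $U\operatorname{diag}(e_j)U^\dagger$, where $e_j$ is the scalar quadrature error for the function $f(\lambda_j,\cdot)$.

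Since $U$ is unitary, $\|E\|=\max_j|e_j|$. For each $j$, I invoke the standard Gauss--Legendre remainder on $[a,b]$ with $Q$ nodes:
\[
e_j=\frac{(b-a)^{2Q+1}(Q!)^4}{(2Q+1)[(2Q)!]^3}\,\partial_x^{2Q}f(\lambda_j,\eta_j),\qquad \eta_j\in(a,b).
\]
This remainder comes from Hermite interpolation at the nodes, followed by the mean value theorem for integrals. The mean value theorem applies because the weight $\prod_q(x-x_q)^2$ is nonnegative. Also, $\int_a^b\prod_q(x-x_q)^2\,\d x=(b-a)^{2Q+1}(Q!)^4/((2Q+1)[(2Q)!]^2)$, and dividing by $(2Q)!$ gives the stated constant. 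If $f(\lambda_j,\cdot)$ is complex-valued, as it is in our application with $\e^{\i\lambda\mu t}$, the mean value form cannot be used directly. Instead I bound the Peano/Hermite remainder in absolute value, $|e_j|\le\frac{1}{(2Q)!}\max_x|\partial_x^{2Q}f(\lambda_j,x)|\int_a^b\prod_q(x-x_q)^2\,\d x$, which gives the same constant.

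Finally, since $|\partial_x^{2Q}f(\lambda_j,x)|\le\|f^{(2Q)}(A,x)\|$ for every $j$ and $x$, we get $\max_j|e_j|\le C_Q\max_{x\in[a,b]}\|f^{(2Q)}(A,x)\|$, which is the claimed bound.

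The only delicate point is the complex-valued case: the usual mean value form would have to be applied separately to the real and imaginary parts, which costs an extra factor of $\sqrt2$. To avoid this, I would use the integral form of the Hermite interpolation error, $f-H=\frac{f^{(2Q)}(\eta(x))}{(2Q)!}\prod_q(x-x_q)^2$, componentwise. More simply, I would use the divided-difference representation $f[x_0,x_0,\ldots,x_{Q-1},x_{Q-1},x]$ together with the Hermite--Genocchi formula. Hermite--Genocchi bounds the complex divided difference in modulus by $\max|f^{(2Q)}|/(2Q)!$, because it is an average of $f^{(2Q)}$ over a simplex. This yields the bound with exactly the stated constant.
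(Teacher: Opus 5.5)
Your proposal is correct and follows the paper's proof. Both diagonalize $A$ and reduce to the scalar Gauss remainder for each eigenvalue. Both handle complex-valued $f(\lambda_j,\cdot)$ through a remainder representation that is linear in $f$ and has a nonnegative weight whose total mass is the stated constant, so no factor $\sqrt{2}$ is lost. The paper gets this representation from the nonnegative Peano kernel $K_Q$ cited from \cite{KMN1989}; your Hermite--Genocchi divided-difference argument makes the same nonnegative weighting explicit. One aside you make does not work as stated: applying the pointwise mean-value form $f-H=\frac{f^{(2Q)}(\eta(x))}{(2Q)!}\prod_q(x-x_q)^2$ separately to the real and imaginary parts would still cost the $\sqrt{2}$, so it is the Hermite--Genocchi step that actually delivers the stated constant.
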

\begin{proof}
Since $A$ is a Hermitian matrix, there exists a unitary matrix $U$ such that $A=U\Lambda U^\dagger$, where $\Lambda=\operatorname{diag}(\lambda_1,\ldots,\lambda_N)$. Thus the quadrature error has the same norm as the corresponding diagonal error for $f(\Lambda,x)$.
According to the Gaussian quadrature error formula in Chapter 5 of \cite{KMN1989}, its Peano kernel $K_Q$ is nonnegative and satisfies
\[
\int_a^b K_Q(t)\d t=\frac{(b-a)^{2Q+1}(Q!)^4}{(2Q+1)[(2Q)!]^3}.
\]
For each eigenvalue, the integral remainder is
\[
\int_a^b f(\lambda_j,x)\d x-\sum_{q=0}^{Q-1}w_qf(\lambda_j,x_q)
=\int_a^b K_Q(t)f^{(2Q)}(\lambda_j,t)\d t.
\]
This identity applies to complex-valued functions by treating their real and imaginary parts separately. Taking absolute values and then the maximum over $j$ proves the desired estimate.
\end{proof}

\begin{lemma}\label{lem:error1}
Let $p(A)$ be defined in \eqref{ShrodingerhA} and let $0<\varepsilon_1\le T/2$. Under the conditions of Theorem~\ref{thm:pdiscretization}, the quadrature error can be bounded as
\begin{equation}\label{suberror}
\|g(A) -  p(A)\| \le \varepsilon_1
\end{equation}
when we choose
\[\tau\|D_\mu\otimes A\|\le1, \quad Q = \Theta\Big(\log\frac{T}{\varepsilon_1}\Big).\]
\end{lemma}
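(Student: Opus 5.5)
Apply Lemma~\ref{lem:ShrodingerGaussErr} on each subinterval $[m\tau,(m+1)\tau]$ with $f(A,t)=U(A,t)=\e^{\i(D_\mu\otimes A)t}$, viewing $D_\mu\otimes A$ as the Hermitian matrix; then sum over the $N_t$ subintervals. Since $D_\mu\otimes A$ is Hermitian, spectral calculus applies directly. The $2Q$th derivative in $t$ is
\[
\partial_t^{2Q}U(A,t)=\bigl(\i(D_\mu\otimes A)\bigr)^{2Q}U(A,t),
\]
and $U(A,t)$ is unitary. Hence $\max_t\|f^{(2Q)}(A,t)\|\le\|D_\mu\otimes A\|^{2Q}$.

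With $b-a=\tau$, each subinterval therefore contributes at most
\[
\frac{\tau^{2Q+1}(Q!)^4}{(2Q+1)[(2Q)!]^3}\|D_\mu\otimes A\|^{2Q}
\le\tau\,\frac{(Q!)^4}{(2Q+1)[(2Q)!]^3},
\]
using $\tau\|D_\mu\otimes A\|\le1$. Summing over $m=0,\dots,N_t-1$ and using $N_t\tau=T$ gives
\[
\|g(A)-p(A)\|\le T\,\frac{(Q!)^4}{(2Q+1)[(2Q)!]^3}.
\]

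It remains to bound the combinatorial factor, which is the only step requiring care. Since $(Q!)^2\le(2Q)!$, the factor is at most $\frac{1}{(2Q+1)(2Q)!}$. By $(2Q)!\ge(2Q/\e)^{2Q}$, or more crudely $(2Q)!\ge 2^{2Q}$ for $Q\ge2$, this is at most $4^{-Q}$, which suffices. So it is enough to require $T\,4^{-Q}\le\varepsilon_1$, i.e.\ $Q\ge\frac12\log_2(T/\varepsilon_1)$. The hypothesis $\varepsilon_1\le T/2$ ensures $\log(T/\varepsilon_1)\ge\log 2>0$, so this choice is $Q=\Theta(\log\frac{T}{\varepsilon_1})$ with $Q\ge1$. (Superexponential decay would even give $Q=\mathcal{O}(\log(T/\varepsilon_1)/\log\log(T/\varepsilon_1))$, but the stated $\Theta(\log)$ choice suffices.)

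The conditions of Theorem~\ref{thm:pdiscretization} enter only to guarantee that $\|D_\mu\otimes A\|=\mu_{\max}\|A\|$ is finite and explicit, so that $N_t$ is well defined. No obstacle is expected beyond this factorial estimate.
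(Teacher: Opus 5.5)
Your proof is correct and follows the paper's argument: apply the Gauss quadrature error bound on each of the $N_t$ subintervals, use $\|U^{(2Q)}(A,t)\|\le\|D_\mu\otimes A\|^{2Q}$ together with $\tau\|D_\mu\otimes A\|\le1$, and sum to obtain $T(Q!)^4/((2Q+1)[(2Q)!]^3)$. The only difference is the factorial estimate. The paper uses $(2Q)!\ge2^Q(Q!)^2$ to get $8^{-Q}$, whereas you use $(Q!)^2\le(2Q)!$ and $(2Q)!\ge4^Q$ to get $4^{-Q}$. Your bound also holds at $Q=1$, since $1/((2Q+1)(2Q)!)=1/6\le1/4$ there, so the restriction to $Q\ge2$ is harmless.
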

\begin{proof}
Write the quadrature error as
\[ \|g(A) -  p(A)\| = \left\| g(A) - \sum_{m=0}^{N_t-1} \sum_{q=0}^{Q-1} w_{q}\tau \e^{\i (D_\mu \otimes A)t_{m,q}}\right\| =: I. \]
By Lemma \ref{lem:ShrodingerGaussErr},
\begin{align*}
\left\|  \int_0^{T} U(A,t) \d t - \sum_{m=0}^{N_t-1} \sum_{q=0}^{Q-1}  w_{q}\tau U(A,t_{m,q})\right\|
\le  \frac{N_t \tau^{2Q+1}(Q!)^4 }{(2Q+1)[ (2Q)! ]^3} \max_t \|U^{(2Q)}(A,t) \|.
\end{align*}
Differentiating the evolution operator gives
\[\max_t \|U^{(2Q)}(A,t) \| \le \|D_\mu \otimes A\|^{2Q},\]
which gives
\[I \le T \frac{\tau^{2Q}(Q!)^4 }{(2Q+1)[ (2Q)! ]^3} \|D_\mu \otimes A\|^{2Q}.\]
If we choose $\tau\|D_\mu\otimes A\|\le1$, then
\[I \le \frac{T(Q!)^4 }{(2Q+1)[ (2Q)! ]^3}. \]
Therefore, we obtain $I \le \varepsilon_1$ by choosing $Q$ such that
\[\frac{(Q!)^4 }{(2Q+1)[ (2Q)! ]^3}   \le \frac{\varepsilon_1}{T}.\]
For $Q\ge1$, the inequality $(2Q)!\ge2^Q(Q!)^2$ gives
\[
\frac{(Q!)^4}{(2Q+1)[(2Q)!]^3}
\le\frac{8^{-Q}}{(2Q+1)(Q!)^2}\le8^{-Q}.
\]
Thus any integer $Q\ge\max\{1,\lceil\frac{\log\frac{T}{\varepsilon_1}}{\log 8}\rceil\}$ suffices. Since $T/\varepsilon_1\ge2$, we can choose $Q=\Theta(\log\frac{T}{\varepsilon_1})$. Rounding upwards to a power of two changes this choice by at most a constant factor, which completes the proof.
\end{proof}

\subsection{Linear combination of unitaries}\label{subsec:complexity_Sch}

We retain the interval $\overline{\Omega}_p$ and the unitary Fourier convention in \eqref{unitary_fourier}.
The summation in \eqref{ShrodingerhA} can be simplified as
\begin{align}
p(A) = \sum_{m=0}^{N_t-1} \e^{\i (D_\mu \otimes A) m\tau} \sum_{q=0}^{Q-1} \alpha_q  \e^{\i (D_\mu \otimes A) \xi_q\tau}
=: U_{N_t} U_Q, \label{pALCU}
\end{align}
where $\alpha_q = w_q \tau$.
Our linear combination is interpreted as two instances of the LCHS, corresponding to time marching and numerical integration.

The complexity is determined by the number of queries needed to implement the select oracles associated with the Hamiltonian evolution
\begin{equation}\label{Utau}
U(s) = \e^{\i (D_\mu \otimes A) s }, \qquad 0\le s\le T.
\end{equation}
Since the complexity considered in this paper is the number of queries to the block-encoding oracle of the coefficient matrix $A$, for the convenience of subsequent discussion, we recall the definition of the block-encoding technique with reference to \cite{Gilyen2019QSVD, Chakraborty2019blockEncode, Low2026quantumlinearsystem, An2022blockEncodingODE}.
\begin{definition}\label{def:blockencoding}
Let $A\in\mathbb{C}^{N\times N}$ with $N=2^n$, and set $\Pi=\bra{0^m}\otimes I_N$. For $\alpha>0$ and $\epsilon\ge0$, an $(m+n)$-qubit unitary $U_A$ is an $(\alpha,m,\epsilon)$-block-encoding of $A$ if
\[\|A-\alpha\Pi U_A\Pi^\dag\|\le\epsilon.\]
\end{definition}

We use the following standard Hamiltonian-simulation result, stated for an exact input block-encoding; see Corollary~62 in the full version of \cite{Gilyen2019QSVD}.
\begin{lemma}[Hamiltonian simulation, \cite{Gilyen2019QSVD}]
\label{lem:hamiltonian_simulation}
Let $H$ be Hermitian and let $U_H$ be an $(\alpha_H,a,0)$-block-encoding of $H$. For any $t\in\mathbb{R}$ and $0<\epsilon<1$, one can implement a unitary $V$ that is a $(1,a+2,\epsilon)$-block-encoding of $\e^{\i tH}$, using
\[
\mathcal{O}\!\Big(\alpha_H|t|+\log\frac1\epsilon\Big)
\]
queries to $U_H$ and its inverse, together with $\mathcal{O}(1)$ controlled queries.
\end{lemma}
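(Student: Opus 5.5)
The plan is to follow the quantum singular value transformation (QSVT) route of \cite{Gilyen2019QSVD}. Since $U_H$ is an exact $(\alpha_H,a,0)$-block-encoding of $H$, it is an exact $(1,a,0)$-block-encoding of $H/\alpha_H$, whose spectrum lies in $[-1,1]$. Setting $\theta=\alpha_H t$, the target is $\e^{\i\theta H/\alpha_H}=\cos(\theta H/\alpha_H)+\i\sin(\theta H/\alpha_H)$. I would treat the even part $\cos(\theta x)$ and the odd part $\sin(\theta x)$ separately. QSVT implements a real polynomial of definite parity with $\sup_{[-1,1]}|P|\le1$ (up to a factor) at a cost of one query to $U_H$ or $U_H^\dagger$ per degree, plus one extra ancilla for the phase rotations.

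\textbf{Step 1: polynomial approximation.} Use the Jacobi--Anger expansions
\[
\cos(\theta x)=J_0(\theta)+2\sum_{k\ge1}(-1)^kJ_{2k}(\theta)T_{2k}(x),\qquad
\sin(\theta x)=2\sum_{k\ge0}(-1)^kJ_{2k+1}(\theta)T_{2k+1}(x).
\]
I would truncate them at degree $d$. The Bessel tail bound $|J_k(\theta)|\le(\e|\theta|/(2k))^k$ shows the truncation error is at most $\epsilon$ once
\[
d=\mathcal{O}\Big(|\theta|+\frac{\log(1/\epsilon)}{\log(\e+\log(1/\epsilon)/|\theta|)}\Big)=\mathcal{O}\Big(\alpha_H|t|+\log\frac1\epsilon\Big).
\]
This is the main technical step, and the only place the precise query bound is decided. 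I would argue it by splitting into two cases. When $k\ge \e|\theta|$, the terms decay geometrically. The remaining range contributes nothing once $d$ exceeds a constant multiple of $|\theta|$ plus $\log(1/\epsilon)$.

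\textbf{Step 2: normalization and QSVT.} Rescale each truncated polynomial by a factor such as $1/(1+\epsilon)$ so that it is bounded by $1$ on $[-1,1]$; this costs only $\mathcal{O}(\epsilon)$ additional error. QSVT then gives $(1,a+1,\mathcal{O}(\epsilon))$-block-encodings of $\cos(tH)$ and $\sin(tH)$, each using $d$ queries. I would combine them by an LCU on one more ancilla: a Hadamard-type preparation, a controlled selection of the two circuits, and a phase $\i$ on the sine branch. This yields a $(2,a+2,\mathcal{O}(\epsilon))$-block-encoding of $\e^{\i tH}$. Only $\mathcal{O}(1)$ controlled queries are needed, because the two QSVT sequences can share the same query pattern, differing only in their phase angles.

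\textbf{Step 3: removing the factor $2$.} The encoded operator is, up to $\mathcal{O}(\epsilon)$, one half of a unitary. One round of robust oblivious amplitude amplification therefore boosts the subnormalization from $1/2$ to $1$. This uses the circuit three times and reflections on the existing ancillas, so it changes the query count by a constant factor and adds no qubits. The error is amplified only by a constant factor, so rescaling $\epsilon$ by a constant at the start gives the stated $(1,a+2,\epsilon)$-block-encoding with $\mathcal{O}(\alpha_H|t|+\log\frac1\epsilon)$ queries.

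I expect Step 1 to be the main obstacle, specifically obtaining the additive form $\alpha_H|t|+\log\frac1\epsilon$ rather than a product. Step 3 is the place to check carefully that robustness holds with inexact (i.e.\ $\epsilon$-close) unitarity.
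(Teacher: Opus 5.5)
Your proposal is correct. It is essentially the argument behind Corollary~62 of \cite{Gilyen2019QSVD}, which the paper invokes without proof: Jacobi--Anger truncation of $\cos(\alpha_H t x)$ and $\sin(\alpha_H t x)$, real QSVT for each parity, an LCU giving a $(2,a+2,\mathcal{O}(\epsilon))$-block-encoding of $\e^{\i tH}$, and one round of robust oblivious amplitude amplification using the circuit three times.

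One small correction concerns the controlled queries. The cosine and sine sequences have degrees of opposite parity (e.g.\ $2R$ and $2R+1$), so they cannot differ only in their phase angles. They differ by one extra application of $U_H$, and that single query is the one that must be controlled, giving three controlled queries in total after amplification.
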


For simplicity, we suppress the simulation auxiliary registers, which are initialized to zero and left unchanged by the ideal evolution. A block-encoding error $\epsilon$ in the preceding lemma gives an error at most $\sqrt{2\epsilon}$ on such inputs, including amplitude leaving the zero auxiliary state. Thus choosing $\epsilon=\varepsilon_2^2/2$ gives simulation error at most $\varepsilon_2$, with query cost $\mathcal{O}(\alpha_H|t|+\log\frac1{\varepsilon_2})$. All simulation error bounds below refer to inputs with the simulation auxiliary registers initialized to zero.

According to \eqref{pALCU}, the quantum simulation can be interpreted as an LCU at two distinct levels.
In the following we will present the details on how to apply the LCU procedure to effectively prepare the solution state
for $A \bb{x} = \bb{b}$. To this end, we need to pre-construct the following oracles.
\begin{itemize}
\item The state preparation oracles for the coefficients:
\[O_{\text{coef},m} = H^{\otimes n_t} : \ket{0^{n_t}} \to \frac{1}{\sqrt{N_t}} \sum\limits_{m=0}^{N_t-1} \ket{m}, \quad N_t = 2^{n_t},\]
\[O_{\text{coef},q}: \ket{0^{n_Q}} \to \frac{1}{\sqrt{\|\bb{\alpha}\|_1}} \sum\limits_{q=0}^{Q-1}\sqrt{\alpha_q} \ket{q}, \quad Q = 2^{n_Q},\]
with  $\bb{\alpha} = (\alpha_0, \cdots, \alpha_{Q-1})$ and $\|\bb{\alpha}\|_1 = \alpha_0 + \cdots + \alpha_{Q-1}$.
It is obvious that
\[\|\bb{\alpha}\|_1 = \sum_{q=0}^{Q-1} |w_q\tau|
= \tau,\]
where the last equality follows from $\sum_q w_q = 1$.

\item The select oracles
\begin{align}
& \text{SEL}_{A,m} = \sum_{m=0}^{N_t-1} \ket{m}\bra{m} \otimes U(m\tau),  \label{SEL_oracle}\\
& \text{SEL}_{A,q} = \sum_{q=0}^{Q-1} \ket{q}\bra{q} \otimes U(s_q), \quad s_q = \xi_q \tau \le \tau, \label{SEL_oracleQ}
\end{align}
where $U(s)=\e^{\i(D_\mu\otimes A)s}$ is the exact evolution. We implement these entire select operators to error $\varepsilon_2$ and denote their approximations by $\widetilde{\text{SEL}}_{A,m}$ and $\widetilde{\text{SEL}}_{A,q}$. The omitted simulation auxiliary registers are included in the final all-zero success test. The operator $\widetilde p(A)/T$ is the block selected by these zero outcomes and by unpreparing the two coefficient registers. Circuit labels below display the ideal target operators.

\item The state preparation oracles
\begin{align}
O_{\psi}: \ket{0^{n_p}} \to \ket{\psi_{\Pi}},\quad O_b: \ket{0^{n}} \to \ket{b},
\end{align}
where $\ket{\psi_{\Pi}}=\bb{\psi}_{\Pi}/\|\bb{\psi}_{\Pi}\|$ is the projected state in \eqref{projected_initial_state} and $\ket{b}=\bb{b}/\|\bb{b}\|$. One may implement $O_\psi$ by preparing $F^\dagger\ket{\psi_{\Pi}}$ and applying $F$; the following $F^\dagger$ then cancels this transform. Thus the same algorithm can start directly from the explicit Fourier amplitudes $c_k$.
\end{itemize}

\begin{figure}[H]
\centering
\centerline{
\Qcircuit @C=0.45cm @R=0.7cm {
& \lstick{\ket{0^{n_t}}} & \gate{H^{\otimes n_t}} & \qw & \qw & \ctrl{2} & \gate{H^{\otimes n_t}} & \qw & \meter \\
& \lstick{\ket{0^{n_Q}}} & \gate{O_{\mathrm{coef},q}} & \qw & \ctrl{1} & \qw & \gate{O_{\mathrm{coef},q}^\dagger} & \qw & \meter \\
& \lstick{\ket{0^{n_p}}} & \gate{O_\psi} & \gate{F^\dagger} & \multigate{1}{\mathrm{SEL}_{A,q}} & \multigate{1}{\mathrm{SEL}_{A,m}} & \gate{F} & \multigate{2}{O_I} & \qw \\
& \lstick{\ket{0^n}} & \gate{O_b} & \qw & \ghost{\mathrm{SEL}_{A,q}} & \ghost{\mathrm{SEL}_{A,m}} & \qw & \ghost{O_I} & \qw \\
& \lstick{\ket0} & \qw & \qw & \qw & \qw & \qw & \ghost{O_I} & \meter
}}
\caption{Quantum circuit for the overall algorithm (see Eq.~\eqref{xTtruncateGauss}). The interval marker $O_I$ acts on the auxiliary grid and flag registers and as the identity on the system register. Success requires flag outcome one and zero outcomes in the coefficient and simulation registers. The auxiliary grid register is not measured; simulation work registers are omitted.}
\label{fig:U_detailed}
\end{figure}

The quantum circuit for the overall algorithm is presented in Fig.~\ref{fig:U_detailed}, with implementation details omitted as the LCU procedures are standard when given the select oracles. Denote by $\mathcal{V}$ the unitary operator up to the final Fourier transform, before interval marking. Let
\[
\bb{W}_{\mathrm{int}}^{\d}
=(F\otimes I)\widetilde p(A)(F^\dagger\bb{\psi}_\Pi\otimes\bb{b}),
\qquad
\ket{W_{\mathrm{int}}^{\d}}=\frac{\bb{W}_{\mathrm{int}}^{\d}}{\|\bb{W}_{\mathrm{int}}^{\d}\|}.
\]
Then,
\begin{equation}\label{W_int}
\mathcal{V}\ket{0^{n_t}}\ket{0^{n_Q}}\ket{0^{n_p}}\ket{0^n}
=\ket{0^{n_t}}\ket{0^{n_Q}}
\frac{\|\bb{W}_{\mathrm{int}}^{\d}\|}{T\|\bb{\psi}_\Pi\|\|\bb{b}\|}
\ket{W_{\mathrm{int}}^{\d}}+\ket\bot.
\end{equation}
Here $\bb{W}_{\mathrm{int}}^{\d}$ approximates $\int_0^T\bb{W}_h(t)\d t$. To obtain the final approximate solution state, we apply the interval marker
\[
O_I:\ket j\ket z\longmapsto\ket j\ket{z\mathbin\oplus\mathbf{1}_I(p_j)}.
\]
The successful component after this operation is
\begin{equation}\label{solution_state}
\ket{0^{n_t}}\ket{0^{n_Q}}
\frac{\|\bb{z}_{h,T}^{\d}\|}{T\|\bb{\psi}_\Pi\|\|\bb{b}\|}
\ket{z_{h,T}^{\d}}\ket1,
\qquad
\ket{z_{h,T}^{\d}}=\frac{\bb{z}_{h,T}^{\d}}{\|\bb{z}_{h,T}^{\d}\|}.
\end{equation}
The remaining component is orthogonal to the success subspace. Theorems~\ref{thm:ShrodingerhAErr} and \ref{thm:time_complexity_Sch} show that $\ket{z_{h,T}^{\d}}$ is close to $\ket{G_I}\ket{x}$, where $\ket{G_I}=\bb{G}_I/\|\bb{G}_I\|$. Thus the auxiliary grid register may be discarded after success. The reversible comparison defining $O_I$ uses only known grid labels and requires no queries to $U_A$ or $O_b$.

The select operators can be implemented directly as evolutions under block-diagonal Hamiltonians. Define the diagonal time-label matrices
\begin{equation}\label{time_label_matrices}
D_m=\sum_{m=0}^{N_t-1}\frac{m}{N_t}\ket{m}\bra{m},
\qquad
D_q=\sum_{q=0}^{Q-1}\xi_q\ket{q}\bra{q}.
\end{equation}
Both have norm at most one. Setting
\[
H_m=D_m\otimes D_\mu\otimes A,\qquad
H_q=D_q\otimes D_\mu\otimes A,
\]
we have the exact identities
\begin{equation}\label{direct_select}
\text{SEL}_{A,m}=\e^{\i T H_m},\qquad
\text{SEL}_{A,q}=\e^{\i\tau H_q}.
\end{equation}
Indeed, the $m$th block of $T H_m$ is $m\tau(D_\mu\otimes A)$, and the $q$th block of $\tau H_q$ is $s_q(D_\mu\otimes A)$. The known diagonal factors $D_m$, $D_q$, and $D_\mu/\mu_{\max}$ admit block-encodings with normalization one, using rotations controlled by their register labels. Tensoring these with $U_A$ gives block-encodings of $H_m$ and $H_q$ with normalization $\alpha_A\mu_{\max}$, each using $\mathcal{O}(1)$ queries to $U_A$. As elsewhere, gates for the known scalar coefficients are not counted as matrix queries; their numerical precision can be chosen within the simulation error budget.

Applying Lemma~\ref{lem:hamiltonian_simulation} to $H_m$ for time $T$ and to $H_q$ for time $\tau$, with the auxiliary-register error conversion described above, gives
\begin{equation}\label{select_error}
\|\widetilde{\text{SEL}}_{A,m}-\text{SEL}_{A,m}\|\le\varepsilon_2,
\qquad
\|\widetilde{\text{SEL}}_{A,q}-\text{SEL}_{A,q}\|\le\varepsilon_2,
\end{equation}
under the same convention of omitted auxiliary registers. The combined matrix-query cost is
\begin{equation}\label{select_cost}
C_{\mathrm{SEL}}
=\mathcal{O}\Big(\alpha_A\mu_{\max}(T+\tau)+\log\frac1{\varepsilon_2}\Big)
=\mathcal{O}\Big(T\alpha_A\mu_{\max}+\log\frac1{\varepsilon_2}\Big).
\end{equation}
Thus $\varepsilon_2$ bounds the error of each entire select operator. There is no accumulation over $N_t$ separately simulated time steps or over $Q$ separately simulated quadrature nodes.

\begin{remark}\label{rem:direct_simulation}
The advantage of direct simulation can be seen by comparing the two implementations at the same target error $\varepsilon_2$. If $U(\tau)$ is approximated with single-step error $\eta$, then repeating this approximation $m$ times gives an error at most $m\eta$. Thus, when constructing $\text{SEL}_{A,m}$ through repeated short-time simulations, choosing $\eta=\varepsilon_2/N_t$ controls the error in all time branches. Applying Lemma~\ref{lem:hamiltonian_simulation} and the auxiliary-register error conversion to each step gives the total matrix-query bound
\[
N_t\,\mathcal{O}\!\Big(\mu_{\max}\alpha_A\tau+\log\frac{N_t}{\varepsilon_2}\Big)
=\mathcal{O}\!\Big(\mu_{\max}\alpha_A T+N_t\log\frac{N_t}{\varepsilon_2}\Big).
\]
In contrast, the identity $\text{SEL}_{A,m}=\e^{\i T H_m}$ allows us to simulate the entire select operator directly with error $\varepsilon_2$, using
\[
\mathcal{O}\!\Big(\mu_{\max}\alpha_A T+\log\frac1{\varepsilon_2}\Big)
\]
matrix queries. The term proportional to the evolution time is the same in both bounds; direct simulation avoids both the stricter single-step tolerance and the repeated precision cost. The partition into $N_t$ intervals is still used for numerical quadrature, but does not require implementing the evolution as $N_t$ approximate short-time steps. Here $\varepsilon_2$ is the select-oracle tolerance; its relation to the final solution error $\varepsilon$ is specified in Theorem~\ref{thm:ShrodingerhAErr}.
\end{remark}

\begin{figure}[htpb]
\centering
\centerline{
\Qcircuit @C=1.0cm @R=1.0cm {
\lstick{\ket{m}} & \multigate{2}{\e^{\i T(D_m\otimes D_\mu\otimes A)}} & \qw \\
\lstick{\ket{k}} & \ghost{\e^{\i T(D_m\otimes D_\mu\otimes A)}} & \qw \\
\lstick{\ket{b}} & \ghost{\e^{\i T(D_m\otimes D_\mu\otimes A)}} & \qw
}}
\caption{Direct Hamiltonian-simulation implementation of $\text{SEL}_{A,m}$. The time label is part of the block-diagonal Hamiltonian. Simulation auxiliary registers are omitted.}
\label{fig:SEL_detailed}
\end{figure}

\begin{theorem}\label{thm:ShrodingerhAErr}
Let $A$ be an invertible Hermitian matrix. Suppose that $\bb{x}$ is the exact solution of $A\bb{x}=\bb{b}$ and $\bb{z}_{h,T}^{\d}$ is the numerical vector defined in \eqref{xTtruncateGauss}. Under the conditions of Theorem~\ref{thm:pdiscretization}, choose a power of two $N_t\ge\max\{1,T\mu_{\max}\alpha_A\}$ and set $\tau=T/N_t$. If we choose
\[
\varepsilon_1=\frac{\varepsilon\xi}{128},\qquad
\varepsilon_2=\frac{\varepsilon_1}{4T},
\]
let $Q$ be a power of two satisfying $Q=\Theta(\log\frac{T}{\varepsilon_1})$, with the constant chosen as in Lemma~\ref{lem:error1}, and implement the select operators directly as in \eqref{direct_select} with the accuracy \eqref{select_error}, then there holds
\begin{equation}\label{joint_solution_error}
\|\bb{z}_{h,T}^{\d}-\bb{G}_I\otimes\bb{x}\|
\le\frac\varepsilon2\|\bb{G}_I\|\|\bb{x}\|.
\end{equation}
\end{theorem}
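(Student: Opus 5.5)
We split the error by the triangle inequality into the discretization part already controlled by Theorem~\ref{thm:pdiscretization} and the implementation part:
\[
\|\bb{z}_{h,T}^{\d}-\bb{G}_I\otimes\bb{x}\|
\le\|\bb{z}_{h,T}-\bb{G}_I\otimes\bb{x}\|
+\|\bb{z}_{h,T}^{\d}-\bb{z}_{h,T}\|.
\]
The first term is at most $\frac\varepsilon4\|\bb{G}_I\|\|\bb{x}\|$ by \eqref{error2}, so it suffices to bound the second term by the same quantity.

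For the second term, note that $\Pi_I$, $F\otimes I_N$ and $F^\dagger\otimes I_N$ all have norm at most one. Therefore
\[
\|\bb{z}_{h,T}^{\d}-\bb{z}_{h,T}\|\le\|\widetilde p(A)-g(A)\|\,\|\bb{\psi}_\Pi\|\|\bb{b}\|.
\]
We split this further as $\|g(A)-p(A)\|+\|p(A)-\widetilde p(A)\|$.

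\textbf{Quadrature error.} The condition $N_t\ge T\mu_{\max}\alpha_A$ gives $\tau\|D_\mu\otimes A\|\le\tau\mu_{\max}\alpha_A\le1$. The hypothesis $\varepsilon_1\le T/2$ of Lemma~\ref{lem:error1} holds because $\xi\le\alpha_{A^{-1}}=T/3$. Lemma~\ref{lem:error1} then yields $\|g(A)-p(A)\|\le\varepsilon_1$.

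\textbf{Simulation error.} We use the factorization $p(A)/T=(\text{unprepare}\circ\text{SEL}_{A,m}\circ\text{prepare})\cdot(\text{unprepare}\circ\text{SEL}_{A,q}\circ\text{prepare})$. The normalizations follow from $\frac{1}{N_t}\cdot N_t\tau\cdot\|\bb\alpha\|_1/\tau$, giving the factor $T$, since $\|\bb\alpha\|_1=\tau$. Here $\widetilde p(A)/T$ is the same projected block with the approximate select operators substituted. A standard hybrid argument applies: each select operator is replaced by its approximation, and the projections and state preparations are contractions, with the auxiliary simulation registers starting in zero. This gives $\|\widetilde p(A)/T-p(A)/T\|\le2\varepsilon_2$. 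Hence $\|p(A)-\widetilde p(A)\|\le2T\varepsilon_2=\varepsilon_1/2$. Some care is needed so that the leakage out of the zero auxiliary state is counted inside the $\varepsilon_2$ of \eqref{select_error}; we simply invoke the convention fixed before \eqref{select_error}.

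\textbf{Combining.} Together, $\|\widetilde p(A)-g(A)\|\le\frac32\varepsilon_1=\frac{3\varepsilon\xi}{256}$. Now use $\|\bb{b}\|=\|\bb{x}\|/\xi$, since $\xi=\|A^{-1}\ket b\|=\|\bb x\|/\|\bb b\|$, and $\|\bb{\psi}_\Pi\|\le12\|\bb{G}_I\|$ from \eqref{projection_parameters}. This gives
\[
\|\bb{z}_{h,T}^{\d}-\bb{z}_{h,T}\|\le\frac{3\varepsilon\xi}{256}\cdot12\|\bb G_I\|\frac{\|\bb x\|}{\xi}=\frac{36}{256}\varepsilon\|\bb G_I\|\|\bb x\|\le\frac\varepsilon4\|\bb G_I\|\|\bb x\|.
\]
Adding the two parts gives \eqref{joint_solution_error}.

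The main obstacle is the simulation-error step, namely justifying rigorously that the error of the two-level LCU block is at most $2T\varepsilon_2$ when the select operators act with omitted auxiliary registers. I would handle it by writing the success block as $\Pi_0 V_2 V_1\Pi_0^\dagger$, with isometric preparations, and applying $\|\widetilde V_2\widetilde V_1-V_2V_1\|\le\|\widetilde V_2-V_2\|+\|\widetilde V_1-V_1\|$ on the zero-auxiliary subspace. The essential point is the ratio $\|\bb\psi_\Pi\|/\|\bb G_I\|\le12$: it is what cancels the grid factor and makes the constants close.
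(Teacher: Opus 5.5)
Your proposal is correct and is essentially the paper's own proof. It uses the same split against \eqref{error2}, Lemma~\ref{lem:error1} for the quadrature part, the bound $\|p(A)-\widetilde p(A)\|\le 2T\varepsilon_2$ from the two directly simulated select operators, and the ratio $\|\bb{\psi}_\Pi\|/\|\bb{G}_I\|\le12$ together with $\|\bb{b}\|=\|\bb{x}\|/\xi$. These give $18\varepsilon_1/\xi=\frac{18}{128}\varepsilon<\frac{\varepsilon}{4}$, exactly as in the paper.

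One small slip: the LCU normalization is $N_t\|\bb{\alpha}\|_1=N_t\tau=T$, whereas your displayed product evaluates to $\tau$. Your conclusion that the success block is $p(A)/T$ is nonetheless right.
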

\begin{proof}
First, $\xi\le\alpha_{A^{-1}}$ and $T=3\alpha_{A^{-1}}$ imply
\[
0<\frac{\varepsilon_1}{T}\le\frac{\varepsilon}{384}\le\frac1{768},\qquad
0<\varepsilon_2=\frac{\varepsilon_1}{4T}\le\frac1{3072}.
\]
Thus Lemma~\ref{lem:error1} applies without any additional minimum in the tolerance choices.
The product of the two approximate select operators differs from its ideal counterpart by at most $2\varepsilon_2$. Preparing and unpreparing the coefficient states and projecting onto their zero outcomes do not increase this error. Since the resulting ideal block is $p(A)/T$, we obtain
\begin{equation}\label{hamiltion_error}
\|p(A)-\widetilde p(A)\|\le2T\varepsilon_2.
\end{equation}
The factor $T$ is the total quadrature weight; the direct select implementation introduces no additional factor $N_t$. Because $F$ is unitary and $\|\Pi_I\|=1$, we obtain from Lemma~\ref{lem:error1} that
\begin{equation}\label{errgA}
\|\bb{z}_{h,T}-\bb{z}_{h,T}^{\d}\|
\le(\varepsilon_1+2T\varepsilon_2)\|\bb{\psi}_\Pi\|\|\bb{b}\|.
\end{equation}
By Theorem~\ref{thm:pdiscretization}, $\|\bb{\psi}_\Pi\|/\|\bb{G}_I\|\le12$. Therefore,
\[
\frac{\|\bb{z}_{h,T}-\bb{z}_{h,T}^{\d}\|}{\|\bb{G}_I\|\|\bb{x}\|}
\le\frac{18\varepsilon_1}{\xi}=\frac{18}{128}\varepsilon<\frac\varepsilon4.
\]
The proof is completed by combining \eqref{errgA} and \eqref{error2}.
\end{proof}

\begin{theorem}\label{thm:time_complexity_Sch}
Let $A$ be an invertible Hermitian matrix and assume that $\bb{x}$ is the solution to $A\bb{x}=\bb{b}$. Under the conditions of Theorem~\ref{thm:ShrodingerhAErr}, there exists a quantum algorithm that prepares an $\mathcal{O}(\varepsilon)$-approximation of the state $\ket{x}=\bb{x}/\|\bb{x}\|$, measured in trace distance, with $\Omega(1)$ success probability and a flag indicating success, using
\[
\mathcal{O}\Big(\frac{T}{\xi}\Big)
\]
queries to the coefficient oracles $O_{\mathrm{coef},m}$, $O_{\mathrm{coef},q}$, the select oracles $\mathrm{SEL}_{A,m}$, $\mathrm{SEL}_{A,q}$, and the state preparation oracles $O_b$, $O_\psi$, where $\xi=\|A^{-1}\ket{b}\|$. Interval marking uses no matrix or state-preparation queries.
\end{theorem}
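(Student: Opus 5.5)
The plan is to analyze one run of the circuit in Fig.~\ref{fig:U_detailed}, lower-bound its success amplitude, and then boost it with ordinary amplitude amplification. One round calls each of $O_{\mathrm{coef},m}$, $O_{\mathrm{coef},q}$, $\mathrm{SEL}_{A,m}$, $\mathrm{SEL}_{A,q}$, $O_b$, $O_\psi$ (and their inverses) a constant number of times. The marker $O_I$ uses no queries. So the total query count is $\mathcal{O}(1/a)$, where $a$ is the success amplitude read off from \eqref{solution_state}:
\[
a=\frac{\|\bb{z}_{h,T}^{\d}\|}{T\|\bb{\psi}_\Pi\|\|\bb{b}\|}.
\]
The whole argument therefore reduces to showing $a=\Omega(\xi/T)$ and bounding the error of the flagged state.

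\textbf{Step 1: lower bound on the amplitude.} By Theorem~\ref{thm:ShrodingerhAErr} and $\varepsilon\le1/2$,
\[
\|\bb{z}_{h,T}^{\d}\|\ge\Big(1-\frac\varepsilon2\Big)\|\bb{G}_I\|\|\bb{x}\|\ge\frac34\|\bb{G}_I\|\|\bb{x}\|.
\]
Theorem~\ref{thm:pdiscretization} gives $\|\bb{\psi}_\Pi\|\le12\|\bb{G}_I\|$, and $\|\bb{x}\|=\xi\|\bb{b}\|$ by definition of $\xi$. Combining these,
\[
a\ge\frac{3\xi}{48\,T}=\frac{\xi}{16T}.
\]
The grid factors cancel here, which is exactly the point of interval recovery.

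\textbf{Step 2: amplitude amplification.} I apply fixed-point or ordinary amplitude amplification with $\mathcal{O}(T/\xi)$ rounds. For ordinary AA, I replace $a$ by a known lower bound and handle overshoot in the standard way, e.g.\ by randomizing the round count. This yields $\Omega(1)$ success probability with the flag, which consists of the interval flag together with all-zero coefficient and simulation registers. Each round uses $\mathcal{O}(1)$ applications of $\mathcal{V}$, $\mathcal{V}^\dagger$, $O_I$ and reflections, which gives the claimed $\mathcal{O}(T/\xi)$ bound.

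\textbf{Step 3: accuracy of the output state.} Conditioned on success, the state is $\ket{z_{h,T}^{\d}}$. I use the elementary inequality
\[
\Big\|\frac{\bb{u}}{\|\bb{u}\|}-\frac{\bb{v}}{\|\bb{v}\|}\Big\|\le\frac{2\|\bb{u}-\bb{v}\|}{\|\bb{v}\|}
\]
with $\bb{v}=\bb{G}_I\otimes\bb{x}$. Together with \eqref{joint_solution_error} this gives $\|\ket{z_{h,T}^{\d}}-\ket{G_I}\ket{x}\|\le\varepsilon$. Since trace distance between pure states is bounded by the Euclidean distance, and partial trace is contractive, discarding the grid register leaves a state within $\varepsilon$ of $\ket{x}\bra{x}$ in trace distance. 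Amplitude amplification only rescales the flagged component and so adds no error; if fixed-point AA is used, its small residual error is absorbed into $\mathcal{O}(\varepsilon)$.

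The main obstacle is Step 2. Ordinary AA requires knowing $a$, or a good lower bound on it, which in turn requires an estimate of $\xi$; this matches the paper's constant-factor norm-estimate assumption. If that estimate is not available, I would first try fixed-point amplitude amplification with target lower bound $\xi/(16T)$, giving $\mathcal{O}((T/\xi)\log(1/\delta))$ rounds with constant $\delta$. Failing that, an exponential search over the round count still gives $\mathcal{O}(T/\xi)$ expected queries.
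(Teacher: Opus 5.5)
Your proposal is correct and follows essentially the same route as the paper. You lower-bound the success amplitude by $\Omega(\xi/T)$ using \eqref{joint_solution_error} and the ratio $\|\bb{\psi}_\Pi\|/\|\bb{G}_I\|\le12$ from Theorem~\ref{thm:pdiscretization}, amplify with $\mathcal{O}(T/\xi)$ rounds, and control the output error with the normalized-vector inequality plus contractivity of the partial trace. Your explicit constant $a\ge\xi/(16T)$ and your remarks on how ordinary or fixed-point amplitude amplification uses a lower bound on $a$ only spell out details that the paper leaves implicit.
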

\begin{proof}
Using the inequality $\|\bb{u}/\|\bb{u}\|-\bb{v}/\|\bb{v}\|\|\le2\|\bb{u}-\bb{v}\|/\|\bb{v}\|$ for nonzero vectors and using the estimate in Theorem~\ref{thm:ShrodingerhAErr}, we can bound the error in the joint state after a successful measurement as
\[
\left\|\frac{\bb{z}_{h,T}^{\d}}{\|\bb{z}_{h,T}^{\d}\|}
-\ket{G_I}\ket{x}\right\|
\le\varepsilon,\qquad \ket{G_I}=\frac{\bb{G}_I}{\|\bb{G}_I\|}.
\]
Discarding the auxiliary-variable register therefore gives a state $\rho_x$ with
$\frac12\|\rho_x-\ket{x}\bra{x}\|_1\le\varepsilon$, by contractivity of trace distance under partial trace.

The successful state is obtained by observing all zeros in the coefficient and simulation registers and observing one in the interval flag. The probability of obtaining this approximate state is
\[
\mathrm{P}_r=\left(\frac{\|\bb{z}_{h,T}^{\d}\|}{T\|\bb{\psi}_\Pi\|\|\bb{b}\|}\right)^2.
\]
The success probability can be raised to $\Omega(1)$ by using $\mathcal{O}(g)$ rounds of amplitude amplification, where
\begin{equation}\label{query_times}
g\lesssim\frac{T\|\bb{\psi}_\Pi\|\|\bb{b}\|}{\|\bb{z}_{h,T}^{\d}\|}
\lesssim\frac{T\|\bb{\psi}_\Pi\|}{\xi\|\bb{G}_I\|}
\lesssim\frac T\xi.
\end{equation}
Here \eqref{joint_solution_error} bounds the successful-vector norm from below by $(1-\varepsilon/2)\|\bb{G}_I\|\|\bb{x}\|$, and \eqref{projection_parameters} bounds the ratio of the auxiliary norms. The reflections used in amplitude amplification act on the zero input state and the success flags; they do not require an oracle for $\ket{G_I}$. This completes the proof.
\end{proof}

\subsection{Query complexity of LC-Schr\"odingerization}

\subsubsection{Quadratic dependence on condition number}

Our query complexity result, stated in Theorem~\ref{thm:time_complexity_Sch}, indicates that once the cost of Hamiltonian simulation for the select oracles is taken into account, the overall query complexity exhibits a quadratic dependence on the condition number $\kappa$. In this section, we shall combine the number of queries to the block-encoding $U_A$ of the coefficient matrix $A$ required by the direct implementation of the select oracles, as discussed in \eqref{select_cost}, and then provide the total query complexity needed to recover the original solution.
The following bounds count queries to the matrix block-encoding and right-hand-side preparation oracles. Auxiliary-kernel and quadrature-weight state preparation are treated as the oracles specified above; their gate costs are not included.

\begin{theorem}\label{lem:queries_complexity_Sch}
Let $A$ be an invertible matrix, and consider the linear system $A\bb{x}=\bb{b}$. Suppose that we are given an exact block-encoding of $A$ with normalization factor $\alpha_A\ge\|A\|$ and that an upper bound on its inverse $\alpha_{A^{-1}}\ge\|A^{-1}\|$ is known. Assume access to the auxiliary-state preparation oracles in Section~\ref{subsec:complexity_Sch}. Let
\[
\kappa_A=\alpha_A\alpha_{A^{-1}},\quad
\xi=\|A^{-1}\ket{b}\|.
\]
Here $\kappa_A$ is an upper bound on the condition number $\kappa=\|A\|\|A^{-1}\|$ of $A$. For $0<\varepsilon\le1/2$, there exists a quantum algorithm that prepares an $\mathcal{O}(\varepsilon)$-approximation in trace distance of the state $\ket{x}$ with $\Omega(1)$ success probability and a flag indicating success, using
\begin{equation}\label{costAnew}
\mathcal{O}\left(\frac{\kappa_A^2}{\alpha_A\xi}\log\frac{\kappa_A}{\alpha_A\xi\varepsilon}\right)
\end{equation}
queries to the block-encoding of $A$, and
\begin{equation}\label{costb}
\mathcal{O}\Big(\frac{\kappa_A}{\alpha_A\xi}\Big)
\end{equation}
queries to the preparation oracle for $\bb{b}$. As in the preceding parameter choices, a constant-factor estimate of $\xi$ suffices; its acquisition cost is separate.
\end{theorem}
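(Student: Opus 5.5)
The plan is to combine Theorem~\ref{thm:time_complexity_Sch} with the per-round select cost \eqref{select_cost}, after reducing to the Hermitian case. If $A$ is not Hermitian, we replace it by the dilation $\widetilde A$ and the right-hand side by $\ket{0}\ket{b}$. A block-encoding of $\widetilde A$ with the same normalization $\alpha_A$ uses $\mathcal{O}(1)$ queries to $U_A$ and $U_A^\dagger$. Since $\|\widetilde A^{-1}\|=\|A^{-1}\|$, the bound $\alpha_{A^{-1}}$ still applies, and $\xi$ is unchanged. The solution state of the dilated system is $\ket{1}\ket{x}$, so discarding the first qubit costs nothing and does not increase trace distance. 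From now on $A$ is Hermitian.

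Fix the parameters of Theorems~\ref{thm:pdiscretization} and \ref{thm:ShrodingerhAErr}. A constant-factor estimate of $\xi$ in place of $\xi$ only rescales $C_0$ and $\varepsilon_1$ by constants. Theorem~\ref{thm:time_complexity_Sch} then gives $\mathcal{O}(T/\xi)=\mathcal{O}(\alpha_{A^{-1}}/\xi)=\mathcal{O}(\kappa_A/(\alpha_A\xi))$ uses of each oracle, counting amplitude-amplification rounds. Each round calls $O_b$ and its inverse $\mathcal{O}(1)$ times, which gives \eqref{costb} directly. One might worry that $\alpha_{A^{-1}}/\xi$ could be below one. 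But $\xi\le\|A^{-1}\|\le\alpha_{A^{-1}}$, so the ratio is at least one, and the rounding in the $\mathcal{O}$ is harmless.

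For the matrix queries, each round uses $\widetilde{\mathrm{SEL}}_{A,m}$ and $\widetilde{\mathrm{SEL}}_{A,q}$ a constant number of times, together with their inverses inside the reflections. By \eqref{select_cost}, each round costs $\mathcal{O}(T\alpha_A\mu_{\max}+\log\frac1{\varepsilon_2})$ queries to $U_A$. We have $T\alpha_A=3\kappa_A$, and \eqref{projection_parameters} gives $\mu_{\max}=\mathcal{O}(\log\frac{T}{\varepsilon\xi})$. Also $\varepsilon_2=\varepsilon\xi/(512T)$, so $\log\frac1{\varepsilon_2}=\mathcal{O}(\log\frac{T}{\varepsilon\xi})$. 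Since $\kappa_A\ge1$, this term is dominated by the first. Writing $T/\xi=3\kappa_A/(\alpha_A\xi)$, the per-round cost becomes $\mathcal{O}(\kappa_A\log\frac{\kappa_A}{\alpha_A\xi\varepsilon})$. Multiplying by the $\mathcal{O}(\kappa_A/(\alpha_A\xi))$ rounds gives \eqref{costAnew}.

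The step needing most care is checking that the select-oracle errors stay controlled across all $\mathcal{O}(T/\xi)$ amplification rounds. The error bounds of Theorem~\ref{thm:ShrodingerhAErr} concern the vector produced by a single application of $\mathcal{V}$. Amplitude amplification, however, applies the approximate unitary many times. I would handle this in one of two ways. The first is the standard fact that amplifying an approximate unitary with block error $\delta$ gives final error $\mathcal{O}(g\delta)$. In that case $\varepsilon_2$ must be shrunk by a factor $g=\mathcal{O}(T/\xi)$, which only changes the logarithm by a constant factor, since $\log(T^2/(\varepsilon\xi^2))=\mathcal{O}(\log\frac{T}{\varepsilon\xi})$. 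The second is to note that Theorem~\ref{thm:time_complexity_Sch} already treats the implemented circuit as exact and amplifies it as a whole. I would take the first route, since it is robust and costs only constants.
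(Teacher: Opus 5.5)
Your proposal is correct and follows essentially the same route as the paper. You reduce to the Hermitian case by dilation, take the $\mathcal{O}(T/\xi)=\mathcal{O}(\kappa_A/(\alpha_A\xi))$ amplification rounds from Theorem~\ref{thm:time_complexity_Sch}, and multiply by the direct-simulation select cost $\mathcal{O}(T\alpha_A\mu_{\max}+\log\frac1{\varepsilon_2})=\mathcal{O}(\kappa_A\log\frac{\kappa_A}{\alpha_A\xi\varepsilon})$; the paper differs only in first writing the bound with $\kappa$ and $\|A\|$ before substituting the supplied bounds. Shrinking $\varepsilon_2$ by $g$ is harmless but unnecessary: amplitude amplification of the implemented circuit $\widetilde{\mathcal V}$ keeps the state in the span of the good and bad components of $\widetilde{\mathcal V}\ket{0}$, so the post-selected output is exactly the normalized good component of one application, which is the paper's implicit argument.
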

\begin{proof}
(1) Without loss of generality, we can assume that $A$ is a Hermitian matrix, using the dilation described above when necessary. We first express the costs using $T=3\kappa/\|A\|$ and $R=\|A\|T+2$. For brevity, write
\[
\Lambda_0=\log\frac{\kappa}{\xi\|A\|\varepsilon}.
\]
According to Theorem~\ref{thm:time_complexity_Sch}, we invoke the select oracles and the state preparation oracles $\mathcal{O}(g)$ times, where
\begin{equation}\label{costb_unbounded}
g\lesssim\frac{T\|\bb{\psi}_\Pi\|}{\xi\|\bb{G}_I\|}
\lesssim\frac T\xi
\lesssim\frac{\kappa}{\xi\|A\|}.
\end{equation}
Here we used the interval-recovery estimate in Theorem~\ref{thm:pdiscretization}; the discretization-dependent auxiliary norms cancel.

According to Lemma~\ref{lem:hamiltonian_simulation}, the select oracles $\mathrm{SEL}_{A,m}$ and $\mathrm{SEL}_{A,q}$ defined in \eqref{SEL_oracle} and \eqref{SEL_oracleQ}, implemented directly as in \eqref{direct_select}, have query complexity
\[
C_{\mathrm{SEL}}\lesssim T\mu_{\max}\alpha_A+\log\frac1{\varepsilon_2},
\]
as shown in \eqref{select_cost}. Choose $N_t$ as the smallest power of two at least $T\mu_{\max}\alpha_A$ and set $\tau=T/N_t$. The parameter bounds give
\[
\mu_{\max}=\mathcal{O}(\Lambda_0),\qquad
N_t=\mathcal{O}\Big(\frac{\alpha_A\kappa}{\|A\|}\Lambda_0\Big),\qquad
Q=\Theta\Big(\log\frac{T}{\varepsilon_1}\Big)=\mathcal{O}(\Lambda_0).
\]
The simulation tolerance in Theorem~\ref{thm:ShrodingerhAErr} satisfies
\[
\log\frac1{\varepsilon_2}=\log\frac{512T}{\varepsilon\xi}
=\mathcal{O}(\Lambda_0).
\]
Consequently,
\[
C_{\mathrm{SEL}}\lesssim\frac{\alpha_A\kappa}{\|A\|}\Lambda_0+\Lambda_0
\lesssim\frac{\alpha_A\kappa}{\|A\|}\Lambda_0,
\]
where $\alpha_A/\|A\|\ge1$ and $\kappa\ge1$. Therefore, we use
\begin{equation}\label{costA}
\mathcal{O}(gC_{\mathrm{SEL}})
=\mathcal{O}\left(\frac{\kappa^2\alpha_A}{\xi\|A\|^2}\Lambda_0\right)
\end{equation}
queries to the block-encoding oracle for $A$. The number of queries to the state preparation oracle for $\bb{b}$ is $\mathcal{O}(g)$, as in \eqref{costb_unbounded}.

(2) For brevity, write
\[
\Lambda=\log\frac{\kappa_A}{\alpha_A\xi\varepsilon}.
\]
It is evident that
\[
\kappa=\|A\|\|A^{-1}\|\le\alpha_A\alpha_{A^{-1}}=\kappa_A,\qquad
\frac{\kappa}{\|A\|}=\|A^{-1}\|\le\alpha_{A^{-1}}=\frac{\kappa_A}{\alpha_A}.
\]
Hence, we have
\[
\Lambda_0\le\Lambda,\qquad
\frac{\kappa^2\alpha_A}{\xi\|A\|^2}\le\frac{\kappa_A^2}{\xi\alpha_A},\qquad
\frac{\kappa}{\xi\|A\|}\le\frac{\kappa_A}{\xi\alpha_A}.
\]
These inequalities allow us to replace $\kappa$ and $\kappa/\|A\|$ in \eqref{costA} and \eqref{costb_unbounded} by $\kappa_A$ and $\kappa_A/\alpha_A$, respectively, giving \eqref{costAnew} and \eqref{costb}. The parameter choices can also be made using only these supplied bounds: take $T=3\alpha_{A^{-1}}$, $R=\alpha_AT+2$, and choose $\sigma$, $N_p$, $N_t$, and $Q$ as in Theorems~\ref{thm:pdiscretization} and \ref{thm:ShrodingerhAErr}. Then $\mu_{\max}=\mathcal{O}(\Lambda)$, $g=\mathcal{O}(\kappa_A/(\alpha_A\xi))$, and $C_{\mathrm{SEL}}=\mathcal{O}(\kappa_A\Lambda)$, giving the same bounds without needing the exact norms. A constant-factor estimate of $\xi$ changes only the absolute constants in these choices. The proof is completed.
\end{proof}

\subsubsection{Linear scaling via block preconditioning} \label{sec:preconditioning}

In this section, we demonstrate that the block preconditioning technique introduced in \cite{Low2026quantumlinearsystem} can be applied to achieve linear scaling in the condition number $\kappa_A$.

Let $\Pi_{\text{b}} = \ket{b}\bra{b}$ be the projection operator onto $\ket{b}$. For any $0<s<1$, we define $S = I - (1-s)\Pi_{\text{b}}$, which is referred to as the block preconditioner in \cite{Low2026quantumlinearsystem}, and consider the following preconditioned linear system
\begin{equation}\label{SASb}
SA \bb{x} = S \bb{b}.
\end{equation}
For the quantum computation, we should assume the state preparation oracle
\[O_{Sb}: \ket{0^n} \to  \ket{S b} := \frac{S\bb{b}}{\|S\bb{b}\|}.\]
However, since $\ket{S b} = \ket{b}$, we can take $O_{Sb} = O_b$, where $O_b$ is the state preparation oracle for $\bb{b}$. According to Eq.~(176) in \cite{Low2026quantumlinearsystem}, we can block encode $S$ with normalization factor 1 using two queries to $O_b$. This yields the block-encoding oracle for $SA$ with normalization factor $\alpha_A$.

For the preconditioned system \eqref{SASb}, we summarize its properties below with the details provided in \cite{Low2026quantumlinearsystem}.
\begin{lemma} \label{lem:Sproperty}
Let $A$ be an invertible matrix. Suppose we have a constant multiplicative approximation of the solution norm $\xi = \|A^{-1}\ket{b}\|$, denoted by $\xi_c$, i.e., there exists a constant $c>1$ such that
\[\frac{\xi}{c} < \xi_c < c \xi.\]
Let $\bb{y}=(SA)^{-1}\ket{Sb}$. If we choose $s = \frac{\xi_c}{c\alpha_{A^{-1}}}$, which satisfies
\begin{equation}\label{sChoose}
\frac{\xi}{c^2\alpha_{A^{-1}}} < s < \frac{\xi}{\alpha_{A^{-1}}} \le \frac{\xi}{\|A^{-1}\|} \le 1,
\end{equation}
then we have
\begin{align}
& \xi_{SA}=\|\bb{y}\|=\frac{\xi}{s}, \label{property1}\\
& \|SA\| \le \|A\| \le \alpha_A,\label{property2}\\
&\|(SA)^{-1}\| \le \sqrt{c^4 + 1} \alpha_{A^{-1}}.\label{property3}
\end{align}
\end{lemma}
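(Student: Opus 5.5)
The plan is to work directly with the rank-one structure of $S=I-(1-s)\Pi_{\text{b}}$ and its explicit inverse. Throughout, $\ket{b}$ is normalized, and the right-hand side of the preconditioned system is read as the normalized state $\ket{Sb}=\ket{b}$; this is the reading under which \eqref{property1} holds. Since $0<s<1$, the operator $S$ acts as multiplication by $s$ on $\operatorname{span}\{\ket{b}\}$ and as the identity on its orthogonal complement. Hence $S$ is Hermitian, $\|S\|=1$, and
\[
S^{-1}=I+\Big(\frac1s-1\Big)\Pi_{\text{b}},\qquad S^{-1}\ket{b}=\frac1s\ket{b}.
\]

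\emph{Step 0: the range of $s$.} I first verify \eqref{sChoose}. Substituting $\xi_c>\xi/c$ into $s=\xi_c/(c\alpha_{A^{-1}})$ gives the lower bound $s>\xi/(c^2\alpha_{A^{-1}})$. Substituting $\xi_c<c\xi$ gives $s<\xi/\alpha_{A^{-1}}$. The remaining inequalities follow from $\alpha_{A^{-1}}\ge\|A^{-1}\|$ and $\xi=\|A^{-1}\ket{b}\|\le\|A^{-1}\|$. In particular $0<s<1$, so $S$ is well defined and invertible.

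\emph{Steps 1 and 2: \eqref{property1} and \eqref{property2}.} The equation $SA\bb{y}=\ket{b}$ gives $A\bb{y}=S^{-1}\ket{b}=s^{-1}\ket{b}$. Therefore $\bb{y}=s^{-1}A^{-1}\ket{b}$ and $\|\bb{y}\|=\xi/s$, which is \eqref{property1}. For \eqref{property2}, submultiplicativity with $\|S\|\le1$ gives $\|SA\|\le\|S\|\|A\|\le\|A\|\le\alpha_A$.

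\emph{Step 3: \eqref{property3}, the main step.} Since $\|S^{-1}\|=1/s$ can be as large as $\mathcal{O}(\kappa)$, the crude bound $\|A^{-1}\|\|S^{-1}\|$ is far too weak. I would instead split an arbitrary unit vector as $\bb{v}=\beta\ket{b}+\bb{v}_\perp$ with $\bb{v}_\perp\perp\ket{b}$, so that $|\beta|^2+\|\bb{v}_\perp\|^2=1$. Then
\[
(SA)^{-1}\bb{v}=A^{-1}S^{-1}\bb{v}=\frac{\beta}{s}A^{-1}\ket{b}+A^{-1}\bb{v}_\perp .
\]
The point is that the large factor $1/s$ multiplies only $A^{-1}\ket{b}$, whose norm is $\xi$ rather than $\|A^{-1}\|$. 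By the lower bound in \eqref{sChoose}, $\xi/s<c^2\alpha_{A^{-1}}$. Hence
\[
\|(SA)^{-1}\bb{v}\|\le\alpha_{A^{-1}}\big(c^2|\beta|+\|\bb{v}_\perp\|\big)\le\alpha_{A^{-1}}\sqrt{c^4+1},
\]
where the last inequality is Cauchy--Schwarz applied to $(c^2,1)\cdot(|\beta|,\|\bb{v}_\perp\|)$. Taking the supremum over unit vectors $\bb{v}$ gives \eqref{property3}.

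The only real obstacle is this last step, namely recognizing that the blow-up from $S^{-1}$ is confined to the $\ket{b}$ direction, where it is compensated by the choice $s\approx\xi/\alpha_{A^{-1}}$.
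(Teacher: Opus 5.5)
Your proof is correct and complete. The rank-one inverse $S^{-1}=I+(s^{-1}-1)\Pi_{\text{b}}$ gives \eqref{property1} at once, and $\|S\|=1$ gives \eqref{property2}. For \eqref{property3}, splitting a unit vector along $\ket{b}$ and its orthogonal complement confines the factor $1/s$ to the term $A^{-1}\ket{b}$. Its norm $\xi$ is offset by $s>\xi/(c^2\alpha_{A^{-1}})$, and Cauchy--Schwarz then yields exactly the constant $\sqrt{c^4+1}$.

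The paper does not prove this lemma itself; it only lists the properties and defers to \cite{Low2026quantumlinearsystem}. Your argument therefore supplies a self-contained verification. An equivalent operator form of your main step is $(SA)^{-1}=A^{-1}(I-\Pi_{\text{b}})+s^{-1}A^{-1}\Pi_{\text{b}}$. The two terms act on orthogonal subspaces, so $\|(SA)^{-1}\|^2\le\|A^{-1}\|^2+(\xi/s)^2\le(1+c^4)\alpha_{A^{-1}}^2$.
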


Combining the above discussion, we obtain the linear dependence.

\begin{theorem}\label{the:finall_op_complexty}
Let $A$ be an invertible matrix and consider the preconditioned linear system \eqref{SASb}, with the exact block-encoding, normalization bounds, and auxiliary-state preparation oracles assumed in Theorem~\ref{lem:queries_complexity_Sch}. For $0<\varepsilon\le1/2$, under the conditions of Lemma~\ref{lem:Sproperty}, including access to the constant-factor estimate $\xi_c$, there exists a quantum algorithm that prepares an $\mathcal{O}(\varepsilon)$-approximation in trace distance of the state $\ket{x}=\ket{y}$, where $\ket{y}=\bb{y}/\|\bb{y}\|$, with $\Omega(1)$ success probability and a flag indicating success, using
\begin{equation}\label{optimal_query_bound}
\mathcal{O}\Big(\kappa_A\log\frac1\varepsilon\Big)
\end{equation}
queries to each of the block-encoding oracle for $A$ and the state-preparation oracle for $\bb{b}$.
\end{theorem}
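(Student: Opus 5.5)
The plan is to apply Theorem~\ref{lem:queries_complexity_Sch} to the preconditioned system \eqref{SASb} in place of $A\bb{x}=\bb{b}$. The parameters for $SA$ are taken from Lemma~\ref{lem:Sproperty}.

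\textbf{Step 1: the solution state is unchanged.} Since $S$ is invertible and $S\bb{b}=\bb{b}-(1-s)\bb{b}=s\bb{b}$, we have $\ket{Sb}=\ket{b}$. Moreover,
\[
\bb{y}=(SA)^{-1}\ket{b}=A^{-1}S^{-1}\ket{b}=s^{-1}A^{-1}\ket{b}.
\]
Hence $\ket{y}=\ket{x}$ and $\xi_{SA}=\xi/s$, which is \eqref{property1}.

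\textbf{Step 2: the oracles for $SA$.} A block-encoding of $SA$ with normalization $\alpha_{SA}:=\alpha_A$ is the product of the block-encoding of $S$ (normalization one, two queries to $O_b$) with $U_A$. Each query to it therefore costs one query to $U_A$ and $\mathcal{O}(1)$ queries to $O_b$. This encoding is exact, so Theorem~\ref{lem:queries_complexity_Sch} applies with the following supplied bounds:
\[
\alpha_{SA}=\alpha_A\ge\|SA\| \quad\text{by \eqref{property2}},\qquad
\alpha_{(SA)^{-1}}=\sqrt{c^4+1}\,\alpha_{A^{-1}} \quad\text{by \eqref{property3}}.
\]
The preparation oracle for the right-hand side is $O_{Sb}=O_b$.

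\textbf{Step 3: insert the parameters into \eqref{costAnew} and \eqref{costb}.} The effective condition-number bound is
\[
\kappa_{SA}=\alpha_A\sqrt{c^4+1}\,\alpha_{A^{-1}}=\mathcal{O}(\kappa_A),
\]
since $c$ is a constant. The key ratio is
\[
\frac{\kappa_{SA}}{\alpha_{SA}\,\xi_{SA}}
=\frac{\sqrt{c^4+1}\,\alpha_{A^{-1}}\,s}{\xi}.
\]
Using the upper bound $s<\xi/\alpha_{A^{-1}}$ from \eqref{sChoose}, this is at most $\sqrt{c^4+1}=\mathcal{O}(1)$.

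It follows that the number of queries to $O_b$ coming directly from \eqref{costb} is $\mathcal{O}(1)$. The logarithm in \eqref{costAnew} becomes $\log(\mathcal{O}(1)/\varepsilon)=\mathcal{O}(\log\frac1\varepsilon)$. The prefactor becomes
\[
\frac{\kappa_{SA}^2}{\alpha_{SA}\xi_{SA}}
=\kappa_{SA}\cdot\frac{\kappa_{SA}}{\alpha_{SA}\xi_{SA}}
=\mathcal{O}(\kappa_A).
\]
So the algorithm makes $\mathcal{O}(\kappa_A\log\frac1\varepsilon)$ queries to the block-encoding of $SA$. By Step~2, each such query costs one $U_A$ query and $\mathcal{O}(1)$ $O_b$ queries, which gives \eqref{optimal_query_bound} for both oracles.

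\textbf{Main obstacle: parameter choices need $\xi_{SA}$, not just $\alpha$-bounds.} The parameters in Theorems~\ref{thm:pdiscretization} and \ref{thm:ShrodingerhAErr} depend on $\xi$ through $\sigma$, $\varepsilon_1$ and $\varepsilon_2$. They must therefore be set with a constant-factor estimate of $\xi_{SA}$, which we can compute without any queries. The lower bound in \eqref{sChoose} gives
\[
\xi_{SA}=\frac{\xi}{s}\in\bigl(\alpha_{A^{-1}},\,c^2\alpha_{A^{-1}}\bigr),
\]
so $\alpha_{A^{-1}}$ itself is such an estimate. As noted in the proof of Theorem~\ref{lem:queries_complexity_Sch}, a constant-factor estimate only changes absolute constants.

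It then remains to check two things. First, the error guarantee is unaffected: the output is an $\mathcal{O}(\varepsilon)$-approximation of $\ket{y}=\ket{x}$, with success flagged as in Theorem~\ref{thm:time_complexity_Sch}. Second, no hidden dependence on $s$ remains. The quantity $s$ enters only through $\xi_{SA}$; in particular $\kappa_{SA}$ does not depend on $s$, by \eqref{property3}.
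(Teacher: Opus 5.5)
Your proposal is correct and follows essentially the paper's proof. You apply Theorem~\ref{lem:queries_complexity_Sch} to $SA$ with $\alpha_{SA}=\alpha_A$ and $\alpha_{(SA)^{-1}}=\sqrt{c^4+1}\,\alpha_{A^{-1}}$, use $\xi_{SA}=\xi/s$ with $s=\Theta(\xi/\alpha_{A^{-1}})$ to get $\kappa_{SA}/(\alpha_{SA}\xi_{SA})=\mathcal{O}(1)$, and charge each $SA$ query a constant number of $U_A$ and $O_b$ queries. You also note that $\xi_{SA}\in(\alpha_{A^{-1}},c^2\alpha_{A^{-1}})$, so $\alpha_{A^{-1}}$ itself is the constant-factor estimate needed to set $\sigma$, $\varepsilon_1$ and $\varepsilon_2$; the paper leaves this point implicit.
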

\begin{proof}
Set $C_c=\sqrt{c^4+1}$. Lemma~\ref{lem:Sproperty} allows the normalization bounds
\[
\alpha_{SA}=\alpha_A,\qquad
\alpha_{(SA)^{-1}}=C_c\alpha_{A^{-1}},\qquad
\kappa_{SA}=C_c\kappa_A.
\]
Since $\xi_{SA}=\xi/s$ and $s=\Theta(\xi/\alpha_{A^{-1}})$,
\[
\frac{\kappa_{SA}}{\alpha_{SA}\xi_{SA}}=\Theta(1),\qquad
\frac{\kappa_{SA}^2}{\alpha_{SA}\xi_{SA}}=\mathcal{O}(\kappa_A).
\]
The logarithmic factor in Theorem~\ref{lem:queries_complexity_Sch}, applied to $SA$, satisfies
\[
\log\frac{\kappa_{SA}}{\alpha_{SA}\xi_{SA}\varepsilon}
=\mathcal{O}\bigl(\log\frac1\varepsilon\bigr),
\]
giving \eqref{optimal_query_bound}. Each query to the block-encoding of $SA$ uses a constant number of queries to $U_A$ and $O_b$, so both query counts obey this bound.
\end{proof}

\begin{remark}
The first power of the precision logarithm combines three features: the new kernel allows $T=3\alpha_{A^{-1}}$; interval recovery keeps $\|\bb{\psi}_\Pi\|/\|\bb{G}_I\|$ bounded; and each entire select oracle is simulated directly as in \eqref{direct_select}. After block preconditioning, $T/\xi_{SA}=\Theta(1)$ with $T=3\alpha_{(SA)^{-1}}$, so $\kappa_A$ also disappears from the precision logarithm. Repeating an approximate short-time simulation $N_t$ times would require a tighter per-step tolerance and would not yield the same estimate by the preceding argument. This distinction concerns matrix queries; the number of register qubits and the gates used to manipulate the time labels can still depend on $\log\kappa_A$.
\end{remark}

\begin{remark}
As shown in \cite{Costa2022QLSA,dalzell2026shortcutoptimalquantumlinear}, the cost of estimating $\xi_c$ in Lemma \ref{lem:Sproperty} can scale linearly with the condition number $\kappa$, under their respective access and success-probability assumptions. The bound above is conditional on this estimate; an end-to-end implementation must add the query cost of the chosen norm-estimation procedure.
\end{remark}

\section{Numerical experiments}\label{sec:numerical}

In this section, we compare the Gaussian kernel in \eqref{varphik} with the Gaussian-smoothed hat kernel in \eqref{smoothed_hat_kernel}. We use UnitaryLab (version 1.1.6) to simulate the two-level LCHS circuit with interval recovery, and use classical spectral calculations to examine grid refinement and time truncation. The select and state preparation oracles are implemented as explicit unitaries for the small matrix below. Thus these are simulations of the algorithm at the oracle level; they do not include QSP approximation, amplitude amplification, or block preconditioning, and do not measure the asymptotic matrix-query complexity.

\subsection{Setup and error measures}

We consider the linear system
\begin{equation}\label{numerical_system}
A=\begin{bmatrix}3/2&-1/2\\-1/2&3/2\end{bmatrix},\qquad
\bb{b}=\begin{bmatrix}1\\0\end{bmatrix},\qquad
\bb{x}=A^{-1}\bb{b}=\begin{bmatrix}3/4\\1/4\end{bmatrix}.
\end{equation}
The eigenvalues of $A$ are $1$ and $2$, so $\kappa=2$. For the Gaussian kernel, we take $G(p)=\e^{-p^2/2}$ and $\zeta(p)=-G'(p)=p\e^{-p^2/2}$. For the smoothed hat kernel, we take $G=G_\sigma$ and $\zeta=\zeta_\sigma$ with $\sigma=1/4$. Both kernels use the same periodic domain, Fourier projection, quadrature, and recovery interval $I=[-1/2,1/2]$. In particular, the Gaussian comparison also uses interval recovery, so that differences between the two kernels are not mixed with differences in the recovery procedure. In addition to \eqref{gaussian_coefficients}, the Gaussian coefficients are
\[
c_k^{\mathrm{G}}=-\frac{\i\mu_k\sqrt{2\pi}}{2R}\e^{-\mu_k^2/2}.
\]
These explicit coefficients are used to prepare the projected initial state, rather than sampling the kernel at grid points.

For the circuit experiments, we set $T=3$, $R=\|A\|T+2=8$, and $Q=4$. For each $N_p$, we choose $N_t$ as the smallest power of two satisfying $N_t\ge T\|A\|\mu_{\max}$. Success requires the two coefficient registers to be zero and the interval flag to be one. The probabilities are obtained directly from the simulated state vector, before amplitude amplification.

Let $\bb{W}_j$ denote the time-integrated grid vector reconstructed from the successful coefficient-register branch using its known LCU and initial-state normalization factors. We report the relative joint-vector error
\begin{equation}\label{numerical_joint_error}
E_I=\frac{\Big(\sum_{p_j\in I}\|\bb{W}_j-G(p_j)\bb{x}\|^2\Big)^{1/2}}
{\|\bb{G}_I\|\|\bb{x}\|},
\end{equation}
the trace distance $D_I=\frac12\|\rho_I-\ket{x}\bra{x}\|_1$, where
\[
\rho_I=\frac{\sum_{p_j\in I}\bb{W}_j\bb{W}_j^\dagger}{\sum_{p_j\in I}\|\bb{W}_j\|^2},
\qquad \ket{x}=\frac{\bb{x}}{\|\bb{x}\|},
\]
and the success probability
\begin{equation}\label{numerical_probability}
P_I=\frac{\sum_{p_j\in I}\|\bb{W}_j\|^2}
{T^2\|\bb{\psi}_\Pi\|^2\|\bb{b}\|^2}.
\end{equation}
The classical solution is used only to assess errors; it is not used to rescale the numerical output. All complex amplitudes are retained, and $\rho_I$ is obtained by tracing out the auxiliary grid register.

\subsection{Kernel comparison and interval recovery}

Table~\ref{tab:kernel_circuits} gives the UnitaryLab results. The total qubit count includes the system, Fourier, time-marching, quadrature, and interval-flag registers. Across the six runs, the full recovered grid vectors agree with an independent classical evaluation of the same spectral quadrature to a relative discrepancy below $4\times10^{-15}$.

\begin{table}[htbp]
\centering
\caption{UnitaryLab state-vector simulations for \eqref{numerical_system}, with $T=3$, $R=8$, $Q=4$, and $\sigma=1/4$ for the smoothed hat kernel. Both kernels use interval recovery. The errors and probabilities are defined in \eqref{numerical_joint_error}--\eqref{numerical_probability}.}
\label{tab:kernel_circuits}
\small
\begin{adjustbox}{max width=\textwidth}
\begin{tabular}{@{}lrrrlll@{}}
\toprule
Kernel & $N_p$ & $N_t$ & Qubits & $E_I$ & $D_I$ & $P_I$ \\
\midrule
Gaussian & 32 & 64 & 15 & $2.54\times10^{-2}$ & $8.10\times10^{-3}$ & $9.70\times10^{-2}$ \\
Smoothed hat & 32 & 64 & 15 & $4.53\times10^{-3}$ & $2.04\times10^{-5}$ & $3.41\times10^{-2}$ \\
Gaussian & 64 & 128 & 17 & $2.17\times10^{-2}$ & $7.12\times10^{-3}$ & $8.45\times10^{-2}$ \\
Smoothed hat & 64 & 128 & 17 & $1.81\times10^{-5}$ & $8.03\times10^{-8}$ & $3.22\times10^{-2}$ \\
Gaussian & 128 & 256 & 19 & $1.97\times10^{-2}$ & $6.66\times10^{-3}$ & $7.77\times10^{-2}$ \\
Smoothed hat & 128 & 256 & 19 & $1.75\times10^{-11}$ & $2.00\times10^{-12}$ & $3.07\times10^{-2}$ \\
\bottomrule
\end{tabular}
\end{adjustbox}
\end{table}

For the smoothed hat kernel, increasing $N_p$ from $32$ to $128$ reduces $E_I$ from $4.53\times10^{-3}$ to $1.75\times10^{-11}$, while $P_I$ remains near $3\times10^{-2}$. The Gaussian error remains near $2\times10^{-2}$ because increasing the Fourier resolution does not remove its time-truncation error at $T=3$. However, the Gaussian kernel has a larger success probability in this example. Thus the benefit of the new kernel is its smaller truncation error at this fixed time, rather than a uniformly larger recovery probability.

Figure~\ref{fig:kernel_numerics} further examines these effects by classical spectral evaluation, with the time integral evaluated analytically to isolate Fourier and truncation errors. In panel (b), the smoothed hat error eventually reaches its finite-time error level for the fixed $\sigma$. In panel (c), replacing interval recovery by the single point $p=0$ makes the success probability decrease approximately as $N_p^{-1}$. For example, for the smoothed hat kernel at $N_p=1024$, interval recovery has probability $2.92\times10^{-2}$, compared with $6.02\times10^{-4}$ for point recovery. In panel (d), $N_p=1024$ and $R=\|A\|T+2$ are used for each time $T$. The smoothed hat kernel reaches the floating-point error level by $T=4$, whereas the Gaussian error is still $1.25\times10^{-5}$ at $T=5$. These finite-dimensional experiments illustrate the roles of the kernel and recovery interval; the optimal query bound follows from the preceding analysis.

\begin{figure}[htbp]
\centering
\subfigure[Kernel functions, $\sigma=1/4$.]{%
\includegraphics[width=0.48\textwidth]{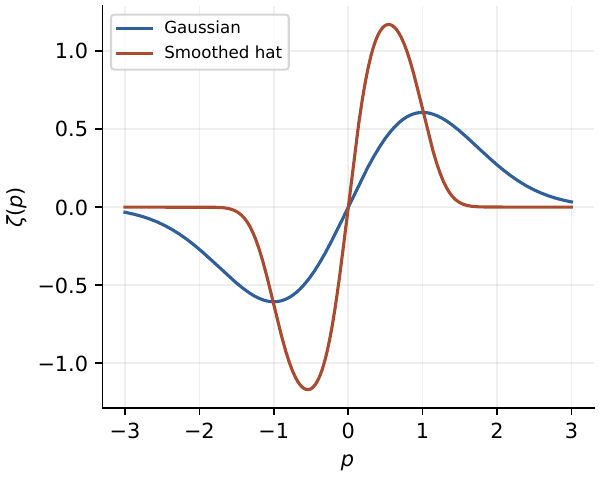}%
\label{fig:kernel_functions}}
\hfill
\subfigure[Fourier resolution, $T=3$.]{%
\includegraphics[width=0.48\textwidth]{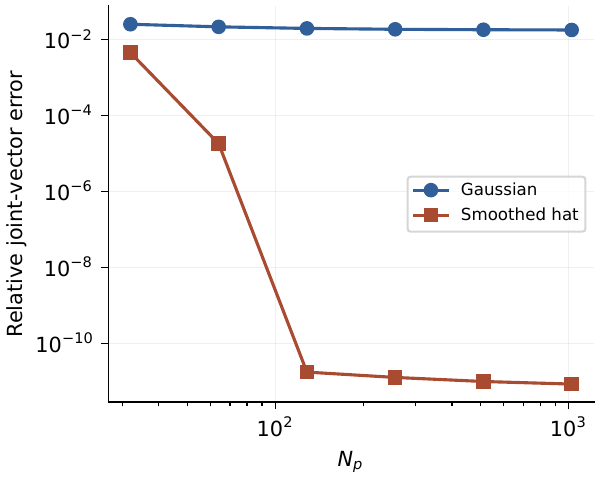}%
\label{fig:fourier_resolution}}

\subfigure[Recovery probability, $T=3$.]{%
\includegraphics[width=0.48\textwidth]{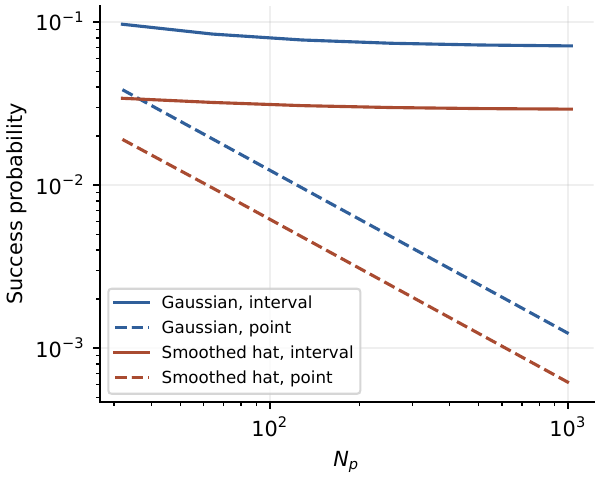}%
\label{fig:recovery_probability}}
\hfill
\subfigure[Time truncation, $N_p=1024$.]{%
\includegraphics[width=0.48\textwidth]{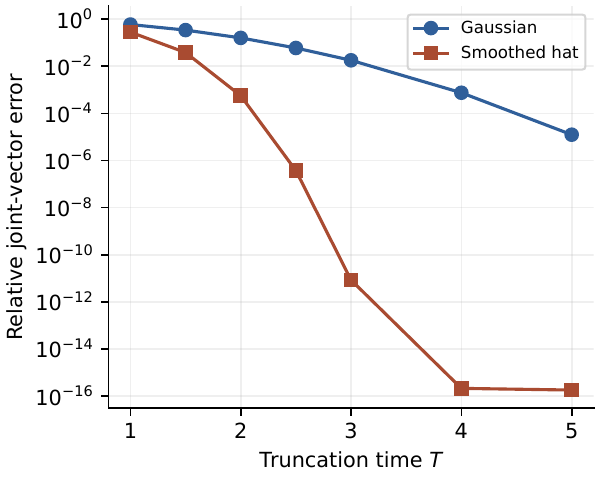}%
\label{fig:time_truncation}}
\caption{Comparison of the Gaussian and smoothed hat kernels for \eqref{numerical_system}. (a) Kernel functions, with $\sigma=1/4$. (b) Relative joint-vector error under Fourier refinement at $T=3$. (c) Success probabilities for interval and point recovery at $T=3$. (d) Dependence on the truncation time with $N_p=1024$. Panels (b)--(d) use classical spectral evaluation with exact time integration; they are separate from the UnitaryLab circuit simulations in Table~\ref{tab:kernel_circuits}. Errors near $10^{-16}$ reflect double-precision roundoff.}
\label{fig:kernel_numerics}
\end{figure}

\clearpage
\section{Conclusions}

In this work, we develop quantum algorithms for solving linear systems of equations from an ODE-based perspective. We introduce an algorithmic framework, which we refer to as LC-Schr\"odingerization. It expresses the solution $\bb{x}=A^{-1}\bb{b}$ as a linear combination of solutions to Schr\"odinger-type equations with unitary evolutions in the Fourier domain. This formulation requires two instances of the linear combination of Hamiltonian simulation (LCHS) method, one for time-marching and one for numerical integration.

The kernel construction and interval recovery are central to the precision dependence. The derivative of a Gaussian-smoothed hat function has a uniformly bounded $L^2$ norm and allows a truncation time independent of the target accuracy. Periodization gives an exactly periodic auxiliary problem without a cut-off function, and the explicit Fourier coefficients control the projection error. Recovering the solution over a fixed interval cancels the grid-dependent normalization factors that reduce the success probability in single-point recovery.

Combining these properties with direct simulation of the select operators and block preconditioning gives the query bound $\mathcal{O}(\kappa_A\log\frac1\varepsilon)$ in Theorem~\ref{the:finall_op_complexty}, without using VTAA. This result is conditional on the stated auxiliary-state oracles and a constant-factor estimate of the solution norm. The UnitaryLab simulations in Section~\ref{sec:numerical} illustrate the accuracy and recovery probability of both kernel choices at the oracle level. A complete gate-level resource analysis, including auxiliary state preparation and the acquisition of the norm estimate, remains a topic for future work.

\section*{Acknowledgments}

The core ideas of this work were proposed by the authors; the underlying LC-Schr\"odingerization framework was presented in their earlier preprint \cite{YangYuZhang2025SchrQLSP}, first posted on arXiv on 19 August 2025. The authors acknowledge the assistance of large language models and AI platforms in selecting the kernel function, working out some details of the mathematical arguments, and polishing the manuscript.

Y. Yang was supported by NSFC grant (No.\ 12571469), Scientific Research Innovation Capability Support Project for Young Faculty of China (No.\ SRICSPYF-BS2025132), the Project of Scientific Research Fund of the Hunan Provincial Science and Technology Department (No.\ 2024JJ1008), the 111 Project (No. D23017), Major Scientific and Technological Innovation Platform Project of Hunan Province (No.\ 2024JC1003), and Program for Science and Technology Innovative Research Team in Higher Educational Institutions of Hunan Province of China.
Y. Yu was supported by NSFC grant (No.\ 12301561), the Key Project of Scientific Research Project of Hunan Provincial Department of Education (No.\ 24A0100), the Science and Technology Innovation Program of Hunan Province (No.\ 2025RC3150) and the general program of Hunan Provincial Natural Science Foundation (No.\ 2026JJ50003), and partially supported by NSFC grant No. 12341104.
L. Zhang was supported by the Hunan Provincial Graduate Student Research and Innovation Project (No.\ CX20250933) and the Xiangtan University Graduate Student Research and Innovation Project (No.\ XDCX2025Y188).

\bibliographystyle{plain} 
\bibliography{Refs}

\end{document}